\setlist[itemize]{topsep=4pt,itemsep=3pt,parsep=0pt} 
\setlist[enumerate]{topsep=4pt,itemsep=3pt,parsep=0pt} 
\crefname{claim}{Claim}{Claims}
\crefname{figure}{Figure}{Figures}
\renewcommand{\preceq}{\preccurlyeq}
\newtheorem{theorem}{Theorem}[section]
\newtheorem{corollary}[theorem]{Corollary}
\newtheorem{lemma}[theorem]{Lemma}
\newtheorem{claim}[theorem]{Claim}
\theoremstyle{definition}
\newtheorem{definition}[theorem]{Definition}
\theoremstyle{plain}
\theoremstyle{definition}
\newtheorem*{example*}{Example}
\numberwithin{equation}{section}
\newenvironment{claimproof}[1][Proof of the claim.]{%
  \begin{proof}[#1]%
}{%
  \end{proof}%
}
\def\phi{\varphi}
\newcommand{\N}{\mathbb{N}}
\newcommand{\Tt}{\mathscr{T}}
\newcommand{\Hh}{\mathscr{H}}
\newcommand{\Ss}{\mathcal{S}}
\newcommand{\aff}[1]{\textcolor{black!60}{\small{#1}}}
\newcommand{\Cols}{\mathcal{C}}
\newcommand{\Tf}{\mathsf{T}}
\newcommand{\Minors}{\mathsf{Minors}}
\newcommand{\WReach}{\mathsf{WReach}}
\newcommand{\wcol}{\mathsf{wcol}}
\newcommand{\Cc}{\mathscr C}
\newcommand{\Dd}{\mathscr D}
\def\epsilon{\varepsilon}
\newcommand{\Oh}{\mathcal{O}}
\renewcommand{\emptyset}{\varnothing}
\renewcommand{\leq}{\leqslant}
\renewcommand{\geq}{\geqslant}
\renewcommand{\le}{\leq}
\renewcommand{\setminus}{-}
\newcommand{\Ramsey}{\mathsf{Ramsey}}
\tikzset{node/.style={draw, circle, fill = black, minimum size = 3pt, inner sep=0pt, line width=1pt}}
\tikzset{nodesubwall/.style={draw, circle, fill = blue, minimum size = 3.1pt, inner sep=0pt}}
\tikzset{corner/.style={draw=magenta, fill = red!20!white, minimum size = 6pt,inner sep=0pt,line width=1pt}}
\tikzset{central/.style={draw=orange, fill = red!20!white, minimum size = 6pt,inner sep=0pt,line width=1pt}}
\tikzset{edge/.style={draw=white!60!black,line width=1.5pt}}
\tikzset{edgesubwall/.style={draw=blue!60!white,line width=2pt}}
\tikzset{subnode/.style={draw, circle, fill = yellow!50!red!50!white, minimum size = 2pt, inner sep=0pt}}
\tikzset{edgethin/.style={draw=white!60!black,line width=1.3pt}}
\begin{document}

\newcommand{\funding}{M.P. was supported by the project BOBR that is funded from the European Research Council (ERC) under the European Union’s Horizon 2020 research and innovation programme with grant agreement No. 948057.
S.T received funding from ERC grant BUKA (No. 101126229). JG and JG are supported by the Polish National Science Centre
SONATA-18 grant number 2022/47/D/ST6/03421.
}

\title{First-order transducibility among classes of sparse graphs\footnote{\funding}}
\date{}
 \author{
   Jakub Gajarsk\'y \\
   \aff{University of Warsaw} \\
   \aff{gajarsky@mimuw.edu.pl}
   \and
   Jeremi Gładkowski \\
   \aff{University of Warsaw} \\
   \aff{j.gladkowski@mimuw.edu.pl}
   \and
   Jan Jedelsk\'y \\
   \aff{Masaryk University, Brno} \\
   \aff{xjedelsk@fi.muni.cz}
   \and
   Michał Pilipczuk \\
   \aff{University of Warsaw} \\
   \aff{michal.pilipczuk@mimuw.edu.pl}
   \and
   Szymon Toruńczyk \\
   \aff{University of Warsaw} \\
   \aff{szymtor@mimuw.edu.pl}
 }
\maketitle

\begin{abstract}
	\noindent We prove several negative results about first-order transducibility for classes of sparse graphs:
	\begin{itemize}
		\item for every $t\in \N$, the class of graphs of treewidth at most $t+1$ is not transducible from the class of graphs of treewidth at most $t$;
		\item for every $t\in \N$, the class of graphs with Hadwiger number at most $t+2$ is not transducible from the class of graphs with Hadwiger number at most $t$; and
		\item the class of graphs of treewidth at most $4$ is not transducible from the class of planar graphs.
	\end{itemize}
	These results are obtained by combining the known upper and lower bounds on the weak coloring numbers of the considered graph classes with the following two new observations:
	\begin{itemize}
		\item If a weakly sparse graph class $\Dd$ is transducible from a class $\Cc$ of bounded expansion, then for some $k\in \N$, every graph $G\in \Dd$ is a $k$-congested depth-$k$ minor of a graph $H^\circ$ obtained from some $H\in \Cc$ by adding a universal vertex.
		\item The operations of adding a universal vertex and of taking $k$-congested depth-$k$ minors, for a fixed $k$, preserve the degree of the distance-$d$ weak coloring number of a graph class, understood as a polynomial in $d$.
	\end{itemize}
\end{abstract}

 \begin{textblock}{20}(-1.75, 3.6)
 \includegraphics[width=40px]{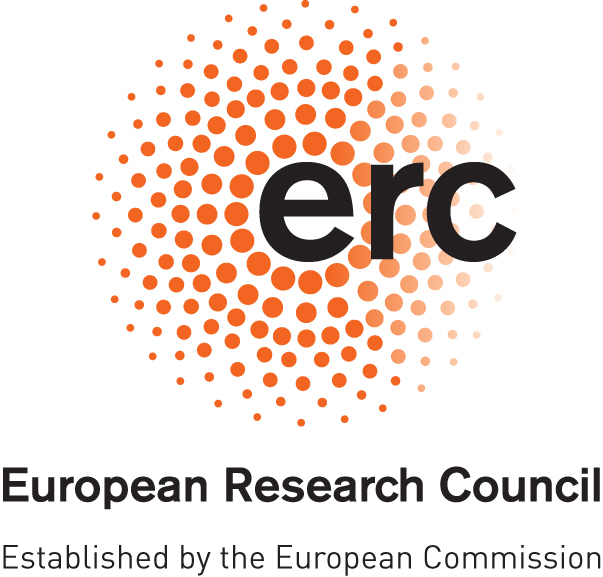}%
 \end{textblock}
 \begin{textblock}{20}(-1.75, 4.6)
 \includegraphics[width=40px]{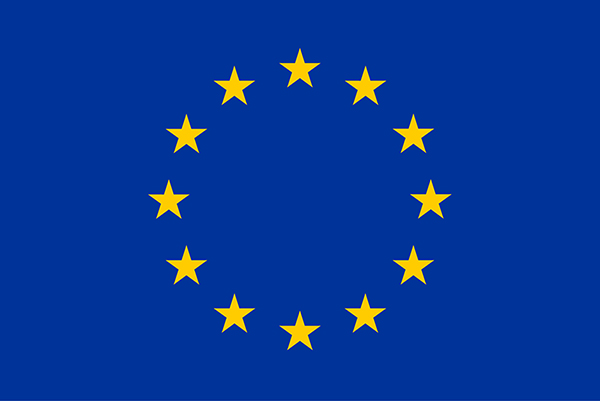}%
 \end{textblock}

\newpage

\clearpage
\setcounter{page}{1}

\section{Introduction}

{\em{Transductions}} are a basic model-theoretic notion whose aim is to capture the concept of encoding one graph in another graph by means of logical formulas. A transduction $\Tf$ is specified by a set of colors $\Cols$ and symmetric formula $\phi(x,y)$ in the signature of graphs with vertices colored with $\Cols$.
We say that a graph $H$ can be {\em{transduced}} from a graph $G$ using $\Tf$ if the adjacency relation of $H$ can be interpreted in some $\Cols$-coloring $G^+$ of $G$ using $\phi$ (see \cref{sec:prelims} for a formal definition). This concept can be applied to different logics, yielding notions of transductions of different power. In this paper we focus on first-order transductions, where $\phi$ is required to be a formula of first-order logic on ($\Cols$-colored) graphs. (For brevity, we will henceforth drop the prefix {\em{first-order}}, assuming all transductions to be such.) 
Transductions are an abstract notion that capture many commonly considered graph operations of both local and global nature, such as taking the power of a graph, taking and induced subgraph, or taking a complement.

A natural setting to consider transductions is that of graph classes. We say that a graph class $\Dd$ is {\em{transducible}} from a graph class $\Cc$ if there is a fixed transduction $\Tf$ such that every graph belonging to $\Dd$ can be transduced from some graph from $\Cc$ using $\Tf$. Since the composition of two transductions is again a transduction, the relation of transducibility induces a quasi-order (a reflexive and transitive relation) on graph classes. Intuitively, this quasi-order stratifies all graph classes according to their expressive power with respect to first-order logic.

Recently, there has been a surge of interest in tranductions due to the on-going project of developing a model-theoretic structure theory for graphs, in which transducibility would be the fundamental notion of embedding. A broad introduction to this rapidly developing field can be found in a recent survey of the fourth author~\cite{PilipczukSurvey}. The focus of this paper is on one particular aspect of the theory: the study of the transducibility quasi-order itself.

While the landscape painted by the transducibility quasi-order is expected to be very varied and detailed (see \cite[Figure~2]{BraunfeldNOS22transductions}), at this point we have only limited tools for proving negative results --- that some graph classes cannot be transduced from others. One technique is to exploit known {\em{transduction ideals}}: properties of graph classes closed under transductions. For instance, it is known that any class transducible from a class of bounded cliquewidth again has bounded cliquewidth, hence the class of planar graphs (which has unbounded cliquewidth) cannot be transduced from the class of graphs of cliquewidth at most $t$, for any $t\in \N$. Other transduction ideals useful for such arguments include classes of bounded twin-width~\cite{BonnetKTW22}; classes of bounded linear cliquewidth; classed of shrubdepth at most $d$, for any fixed $d\in \N$~\cite{GanianHNOM19}; or monadically stable and monadically dependent classes (see~\cite[Section 4.1]{PilipczukSurvey} for an~introduction).

However, reasonings involving known transduction ideals have a very limited strength and cannot explain the whole picture. A more detailed look has recently been offered by Braunfeld, Ne\v{s}et\v{r}il, Ossona de Mendez, and Siebertz~\cite{BraunfeldNOS22transductions}. Among other results, they characterized classes transducible from the class of trees, and they proved that the hierarchy of graph classes of bounded pathwidth is strict: for every $t\in \N$, the class of graphs of pathwidth at most $t+1$ is not transducible from the class of graphs of pathwidth at most $t$. However, the analogous question for treewidth was left open. Very recently, the third author with Hlin\v{e}n\'y~\cite{HlinenyJ25}, and independently the first and the fourth author with Pokr\'yvka~\cite{GajarskyPP25}, proved that the class of three-dimensional grids is not transducible from the class of planar graphs (or more generally, from the class of graphs embeddable in $\Sigma$, for any fixed surface $\Sigma$). Both proofs rely on finding a certain invariant inspired by the product structure of planar graphs~\cite{DujmovicJMMUW20} that is preserved by transductions.

\paragraph*{Our results.} In this paper, we further explore the transducibility quasi-order on classes of sparse graphs. First, we prove (\cref{thm:trans-shallow}) that if a weakly sparse graph class $\Dd$ is transducible from a class of bounded expansion $\Cc$, then for some $k\in \N$, every graph from $\Dd$ can be found as a $k$-congested depth-$k$ minor of a graph from $\Cc$ augmented by adding a universal vertex. Up to a minor technical aspect in the definition of a transduction, this is in fact a characterization of the transducibility quasi-order among graph classes of bounded expansion (see \cref{lem:converse} for the reverse implication).

Based on the result above, we propose a new transduction invariant: the asymptotic growth of the weak $d$-coloring number of the class, expressed as a function of $d$. Here, weak coloring numbers are a family of graph parameters that are commonly used to work with classes of sparse graphs; see the recent survey of Siebertz~\cite{SiebertzSurvey} for an introduction.
We prove (\cref{cor:trans-dominates}) that if a class $\Dd$ is transducible from a class $\Cc$, and both $\Cc$ and $\Dd$ have bounded expansion, then the weak $d$-coloring numbers of $\Dd$ cannot grow asymptotically faster with $d$ than the weak $d$-coloring numbers of~$\Cc$.

The new invariant allows us to derive a host of non-transducibility results for well-studied classes of sparse graphs, by exploiting known upper and lower bounds on their weak coloring numbers. Below we give three concrete examples of such corollaries. 

It is known that for the class of graphs of treewidth at most $t$, the weak $d$-coloring number is equal to~$\binom{d+t}{t}$~\cite{GroheKRSS18}, which is a polynomial in $d$ of degree exactly $t$. This allows us to answer the question about the strictness of the hierarchy of classes of bounded treewidth, which was left open by Braunfeld et al.~\cite{BraunfeldNOS22transductions}.

\begin{restatable}{theorem}{mainTw}\label{thm:main-tw}
	For every $t\in \N$, the class of graphs of treewidth at most $t+1$ is not transducible from the class of graphs of treewidth at most $t$.
\end{restatable}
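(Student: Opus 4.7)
The plan is to apply \cref{cor:trans-dominates} directly, using the exact formula $\wcol_d(G)\le \binom{d+t}{t}$ for graphs $G$ of treewidth at most $t$ (Grohe, Kreutzer, Rabinovich, Siebertz, Stavropoulos~\cite{GroheKRSS18}) as the quantitative separator. The strategy is a contradiction argument: assume that the class $\Dd$ of graphs of treewidth at most $t+1$ is transducible from the class $\Cc$ of graphs of treewidth at most $t$, and derive an impossible inequality between polynomial degrees of weak coloring numbers.

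First I would check the hypotheses of \cref{cor:trans-dominates}. Both $\Cc$ and $\Dd$ have bounded expansion (this is standard for classes of bounded treewidth, since a depth-$r$ minor of a treewidth-$t$ graph still has treewidth at most $t$). The class $\Dd$ is also weakly sparse: a graph of treewidth at most $t+1$ cannot contain $K_{t+3,t+3}$ as a subgraph, since that would force a $K_{t+3}$ minor and hence treewidth at least $t+2$. (Weak sparsity is implicitly needed as the reverse implication would make the whole machine applicable; but note that \cref{cor:trans-dominates} as stated already requires only bounded expansion of both, so no extra check is needed here.)

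Next I would invoke the upper bound of \cite{GroheKRSS18}: for every $G\in\Cc$ and every $d\in\N$,
\[
\wcol_d(G)\;\le\;\binom{d+t}{t}\;=\;\Theta(d^t).
\]
Thus the weak $d$-coloring number of the class $\Cc$ is bounded by a polynomial in $d$ of degree $t$. If $\Dd$ were transducible from $\Cc$, then \cref{cor:trans-dominates} would imply that the weak $d$-coloring number of $\Dd$ is also bounded by some polynomial in $d$ of degree at most $t$.

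Finally I would produce the contradictory lower bound. Here I would exhibit, for every $d$, a graph $G_d\in\Dd$ of treewidth exactly $t+1$ whose weak $d$-coloring number grows like $d^{t+1}$. The natural candidate is the $(t+1)$-dimensional grid of side length $d+1$, or equivalently a sufficiently large chunk of the standard extremal example witnessing that the bound $\binom{d+t+1}{t+1}$ is tight for treewidth $t+1$; the matching lower bound $\wcol_d\ge \Omega(d^{t+1})$ on such graphs is also established in \cite{GroheKRSS18}. This yields a polynomial growth of degree exactly $t+1$ for $\Dd$, contradicting the bound of degree $t$ obtained from transducibility. The conclusion is that $\Dd$ is not transducible from $\Cc$.

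The proof contains no real obstacle beyond citing the right reference for the two-sided asymptotics $\wcol_d=\Theta(d^t)$ on the class of graphs of treewidth $t$; all the conceptual work has been packaged into \cref{cor:trans-dominates}. The only mildly delicate point is making sure that the transduction invariant in \cref{cor:trans-dominates} is indeed the asymptotic degree of $\wcol_d$ as a polynomial in $d$, so that a separation of degrees $t$ versus $t+1$ is enough — which it is, since degrees of polynomials are preserved under the domination relation between functions $\N\to\N$ used in the statement.
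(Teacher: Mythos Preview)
Your proposal is correct and follows essentially the same route as the paper: both apply \cref{cor:trans-dominates} together with the exact formula $\wcol_d(\Tt_t)=\binom{d+t}{t}$ from~\cite{GroheKRSS18}, and conclude by comparing polynomial degrees $t$ versus $t+1$. One small slip: the $(t+1)$-dimensional grid does \emph{not} have treewidth $t+1$ (its treewidth grows with the side length), so it is not a valid witness in $\Dd$; you are saved by the fact that you immediately defer to the tight lower bound construction in~\cite{GroheKRSS18}, which is the right move.
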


Further, recall that the {\em{Hadwiger number}} of a graph $G$ is the largest $t$ such that $G$ contains the complete graph $K_t$ as a minor. It is known that the weak $d$-coloring number of the class of graphs of Hadwiger number at most $t$ is upper bounded by $\Oh(d^t)$~\cite{HeuvelMQRS17} and lower bounded by $\Omega(d^{t-1})$~\cite{GroheKRSS18}. Hence, we have:

\begin{restatable}{theorem}{mainHadw}\label{thm:main-hadw}
	For every positive $t\in \N$, the class of graphs of Hadwiger number at most $t+2$ is not transducible from the class of graphs of Hadwiger number at most $t$.
\end{restatable}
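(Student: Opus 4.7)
The plan is to instantiate the polynomial-degree invariant of \cref{cor:trans-dominates} on $\Cc := $ the class of graphs of Hadwiger number at most $t$ and $\Dd := $ the class of graphs of Hadwiger number at most $t+2$. The argument is essentially quantitative: I will exhibit a strict gap between the asymptotic growth rates of the weak $d$-coloring numbers of $\Cc$ and $\Dd$ (viewed as polynomials in $d$), and then invoke \cref{cor:trans-dominates} to rule out transducibility from $\Cc$ to $\Dd$.

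First, I would verify the hypotheses of \cref{cor:trans-dominates}. Both $\Cc$ and $\Dd$ are minor-closed, and minor-closed classes have bounded expansion, so the bounded-expansion precondition is satisfied on both sides. Moreover, $\Dd$ is weakly sparse: contracting a perfect matching of $K_{s,s}$ produces $K_s$ as a minor, so the Hadwiger number of $K_{s,s}$ is at least $s$, and consequently no graph in $\Dd$ contains $K_{t+3,t+3}$ as a subgraph.

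Next, I would plug in the known weak-coloring-number bounds. For $\Cc$, the upper bound of van~den~Heuvel, Mohar, Quiroz, Reed, and Siebertz~\cite{HeuvelMQRS17} gives $\wcol_d(G) = \Oh(d^t)$ uniformly over $G\in \Cc$, so the weak $d$-coloring number of $\Cc$ is bounded by a polynomial in $d$ of degree~$t$. For $\Dd$, the lower bound of Grohe, Kreutzer, Rabinovich, Siebertz, and Stavropoulos~\cite{GroheKRSS18}, applied with parameter $t+2$, furnishes a family of graphs $G_d \in \Dd$ satisfying $\wcol_d(G_d) = \Omega(d^{t+1})$, so the weak $d$-coloring number of $\Dd$ is not dominated by any polynomial in $d$ of degree at most $t$.

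Finally, combining these two inputs with \cref{cor:trans-dominates} yields the desired contradiction: transducibility of $\Dd$ from $\Cc$ would force $\wcol_d(\Dd)$ to grow asymptotically no faster than $\wcol_d(\Cc)$, whereas the degree-$(t+1)$ lower bound strictly exceeds the degree-$t$ upper bound. I do not anticipate any substantive obstacle in carrying this out: all the intellectual work is packaged inside \cref{thm:trans-shallow} and \cref{cor:trans-dominates}, and the proof of \cref{thm:main-hadw} reduces to a comparison of the best known weak-coloring-number bounds for minor-closed classes.
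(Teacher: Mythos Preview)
Your proposal is correct and follows essentially the same approach as the paper: apply \cref{cor:trans-dominates} together with the upper bound $\pi_{\Hh_t}(d)=\Oh(d^t)$ from~\cite{HeuvelMQRS17} and the lower bound $\pi_{\Hh_{t+2}}(d)=\Omega(d^{t+1})$ from~\cite{GroheKRSS18}, then compare polynomial degrees. The only difference is that you spell out the verification of the weak-sparsity and bounded-expansion hypotheses, which the paper leaves implicit.
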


Finally, from the known upper bound of $\Oh(d^3)$ on the weak $d$-coloring number of planar graphs~\cite{HeuvelMQRS17}, combined with the lower bound of $\Omega(d^4)$ on the weak $d$-coloring number of graphs of treewidth at most~$4$~\cite{GroheKRSS18}, we get the following:

\begin{restatable}{theorem}{mainPlanarTw}\label{thm:main-planar}
	The class of graphs of treewidth at most $4$ is not transducible from the class of planar graphs.
\end{restatable}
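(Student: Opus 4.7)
The plan is to apply the transduction invariant stated as \cref{cor:trans-dominates} in a contrapositive manner, reducing the problem to a comparison of the asymptotic growth rates of weak coloring numbers of the two classes.

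First I would verify that the hypotheses of \cref{cor:trans-dominates} are satisfied. The class of planar graphs has bounded expansion (being a minor-closed class that excludes $K_5$), and the class of graphs of treewidth at most $4$ also has bounded expansion, since any graph class of bounded treewidth trivially has bounded expansion. Furthermore, the class of graphs of treewidth at most $4$ is weakly sparse: since $K_{t,t}$ has treewidth at least $t$, any graph of treewidth at most $4$ excludes $K_{5,5}$ as a subgraph. Thus both classes fall into the regime where the invariant applies.

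Next I would invoke the two known bounds on weak coloring numbers quoted in the introduction. For planar graphs, the result of van den Heuvel et al.~\cite{HeuvelMQRS17} gives $\wcol_d = \Oh(d^3)$, so the polynomial degree of the weak $d$-coloring number of the planar class is at most $3$. On the other hand, the result of Grohe et al.~\cite{GroheKRSS18} gives $\wcol_d = \binom{d+4}{4} = \Theta(d^4)$ for the class of graphs of treewidth at most $4$, so the polynomial degree is exactly $4$.

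Finally, suppose for contradiction that the class of graphs of treewidth at most $4$ is transducible from the class of planar graphs. Then \cref{cor:trans-dominates} (applied with $\Cc$ being the planar class and $\Dd$ being the treewidth-$4$ class, both of bounded expansion and $\Dd$ being weakly sparse) forces the asymptotic growth of $\wcol_d$ for $\Dd$ to be dominated by that of $\Cc$. Comparing the two polynomial degrees, $4 \leq 3$, yields a contradiction.

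The only genuine content that the proof requires from this paper is \cref{cor:trans-dominates}; the rest is bookkeeping and the citation of the known bounds. There is no real obstacle beyond confirming that the two classes indeed fall within the hypotheses of the invariant, in particular that the target class is weakly sparse, which follows immediately from the treewidth lower bound for balanced complete bipartite subgraphs.
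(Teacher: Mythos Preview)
Your proposal is correct and follows essentially the same route as the paper's own proof: apply \cref{cor:trans-dominates} with $\Cc$ the class of planar graphs and $\Dd$ the class of graphs of treewidth at most $4$, and compare the known bounds $\pi_{\mathscr{Pl}}(d)\leq \Oh(d^3)$~\cite{HeuvelMQRS17} against $\pi_{\Tt_4}(d)=\binom{d+4}{4}=\Theta(d^4)$~\cite{GroheKRSS18}. The paper's version is slightly terser, omitting the explicit verification that the two classes satisfy the hypotheses of the corollary, but otherwise identical.
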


\cref{thm:main-planar} stands in contrast with the result of Braunfeld et al.~\cite{BraunfeldNOS22transductions} that for every $t\in \N$, the class of graphs of pathwidth at most $t$ is in fact transducible from the class of planar graphs.
Further, while graphs of treewidth $2$ are known to be planar, \cref{thm:main-planar} leaves it open whether graphs of treewidth $3$ are transducible from planar graphs. Note that a negative answer to this question would follow from an improvement of the $\Oh(d^3)$ upper bound on the weak $d$-coloring number of planar graphs~\cite{HeuvelMQRS17}.

%
%

\section{Preliminaries}\label{sec:prelims}

By $\N$ we denote the set of nonnegative integers.

\paragraph*{Graphs.}
We assume standard graph notation and terminology. All graphs considered in this paper are undirected and simple (i.e., without loops or parallel edges). The vertex set and the edge set of a graph $G$ are denoted by $V(G)$ and $E(G)$, respectively. For a vertex subset $A\subseteq V(G)$, the induced subgraph $G[A]$ consists of all the vertices of $A$ and edges of $G$ with both endpoints in $A$. The {\em{radius}} of a connected graph $G$ is the smallest $d$ satisfying the following: there is a vertex $u$ such that every other vertex can be reached from $u$ by a path of length at most $d$.

A {\em{graph class}} is just a set of graphs, typically infinite. We say that a graph class $\Cc$ is {\em{weakly sparse}} if there is $t\in \N$ such that no member of $\Cc$ contains the complete bipartite graph $K_{t,t}$ as a subgraph.

\paragraph*{Congested shallow minors and bounded expansion.} Let $G$ be a graph. We say that a vertex subset $A\subseteq V(G)$ is {\em{connected}} if the induced subgraph $G[A]$ is connected. Further, two vertex subsets $A,B\subseteq V(G)$ {\em{touch}} if they share a vertex ($A\cap B\neq \emptyset$) or are adjacent (there exist adjacent $a\in A$ and $b\in B$).

For two graphs $H$ and $G$, a {\em{model}} of $H$ in $G$ is a mapping $\eta$ that assigns each vertex $u\in V(H)$ a connected vertex subset $\eta(u)\subseteq V(G)$, called the {\em{branch set}} of $u$, so that the following condition is satisfied: whenever $u$ and $v$ are adjacent in $H$, the branch sets $\eta(u)$ and $\eta(v)$ touch. Note that we do not require the branch sets to be disjoint. For $c,d\in \N$, we say that $H$ is a {\em{congestion-$c$ depth-$d$ minor}} of $G$ if there is a model $\eta$ of $H$ in $G$ such that
\begin{itemize}
	\item for every $u\in V(H)$, the graph $G[\eta(u)]$ has radius at most $d$; and
	\item for every $v\in V(G)$, there are at most $c$ vertices $u\in V(H)$ such that $v\in \eta(u)$.
\end{itemize}
By $\Minors^c_d(G)$ we denote the set of all congestion-$c$ depth-$d$ minors of $G$. For a graph class $\Cc$, we define
\[\Minors^c_d(\Cc)\coloneqq \bigcup_{G\in \Cc} \Minors^c_d(G).\]

Note that if we set $c=1$ in the definition above, we require the branch sets to be pairwise disjoint. In this case we may speak simply about {\em{depth-$d$ minors}}, whose set we denote by $\Minors_d(G)$ (for a graph $G$) and $\Minors_d(\Cc)$ (for a class $\Cc$). The following definition is a fundamental notion in the theory of Sparsity.

\begin{definition}
	A graph class $\Cc$ has {\em{bounded expansion}} if for every $d\in \N$ there exists $p_d\in \N$ such that
	\[|E(H)|\leq p_d\cdot |V(H)|\qquad\textrm{for every }H\in \Minors_d(\Cc).\]
\end{definition}

It turns out that classes of bounded expansion are closed under taking congested shallow minors, in the following sense.

\begin{theorem}[cf. {\cite[Proposition~4.6]{sparsity}} and {\cite[Chapter 1, Corollary 2.28]{sparsityNotes}}]\label{prop:cong-minors-be}
	For all $c,d\in \N$ and every graph class $\Cc$ of bounded expansion, the class $\Minors^c_d(\Cc)$ also has bounded expansion.
\end{theorem}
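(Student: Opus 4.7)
The plan is to reduce the congested case to the uncongested case by combining a composition lemma with a lexicographic blow-up, and then invoke the assumption of bounded expansion on the blow-up class.

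First I would establish a composition lemma: if $K \in \Minors_{d'}(H)$ with witness model $\eta_2$, and $H \in \Minors^c_d(G)$ with witness model $\eta_1$, then $K \in \Minors^c_D(G)$ for $D = (2d+1)d' + d$, via the composed model $\eta(u) \coloneqq \bigcup_{v \in \eta_2(u)}\eta_1(v)$. Connectivity of each $\eta(u)$ follows because $G[\eta_1(v)]$ is connected and, whenever $v,v'$ are adjacent in $H$ with $v,v' \in \eta_2(u)$, the branch sets $\eta_1(v),\eta_1(v')$ touch; the radius bound is routine path-composition; and the congestion stays at $c$ because the outer model $\eta_2$ is uncongested, so each $w \in V(G)$ lies in $\eta(u)$ only for those $u \in V(K)$ for which some $v \ni_{\eta_1} w$ lies in $\eta_2(u)$. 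Consequently $\Minors_{d'}(\Minors^c_d(\Cc)) \subseteq \Minors^c_D(\Cc)$, and it suffices to bound edge densities in $\Minors^c_D(\Cc)$ for all $D$.

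Second I would eliminate congestion via the $c$-blow-up $G^{[c]} \coloneqq G[K_c]$ (the lexicographic product with a clique of size $c$). I claim $\Minors^c_d(G) \subseteq \Minors_d(G^{[c]})$: given a congestion-$c$ model $\eta$ in $G$, assign to each $w \in V(G)$ distinct copies $w^{(1)},\dots,w^{(k_w)}$, one per branch set containing $w$; the lifted model $\eta'(u) \coloneqq \{w^{(j(w,u))} : w \in \eta(u)\}$ is uncongested, and connectivity as well as the touching of adjacent branch sets are preserved because in $G^{[c]}$ every two copies of the same vertex are adjacent, and copies of $G$-adjacent vertices remain adjacent. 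The radius of $\eta'(u)$ in $G^{[c]}$ equals the radius of $\eta(u)$ in $G$, so we stay at depth $d$.

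Third I would verify that $\Cc^{[c]} \coloneqq \{G^{[c]} : G \in \Cc\}$ has bounded expansion whenever $\Cc$ does. The quickest route is through weak coloring numbers and Zhu's characterization of bounded expansion. Given an ordering $\sigma$ of $V(G)$ realizing $\wcol_r(G)$, extend it to an ordering $\sigma'$ of $V(G^{[c]})$ by placing the $c$ copies of each $w$ consecutively in the position of $w$. A walk of length at most $r$ in $G^{[c]}$ projects to a walk of length at most $r$ in $G$, and the projection maps $\sigma'$-minima to $\sigma$-minima, so $\WReach_r[G^{[c]}, \sigma', w^{(j)}]$ projects into $\WReach_r[G,\sigma,w]$ with at most $c$ preimages per element. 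Hence $\wcol_r(G^{[c]}) \leq c \cdot \wcol_r(G)$ for every $r$, and bounded expansion of $\Cc^{[c]}$ follows.

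Putting the pieces together: $\Minors_{d'}(\Minors^c_d(\Cc)) \subseteq \Minors^c_D(\Cc) \subseteq \Minors_D(\Cc^{[c]})$, and the last class has linearly bounded edge density by the bounded expansion of $\Cc^{[c]}$. The main conceptual obstacle is the assignment of clique-copies in step two -- one must check simultaneously that distinct $u$'s get distinct copies (to kill congestion) and that the new branch sets remain connected with unchanged radius; once this is set up cleanly, the remaining steps are standard. If one wishes to avoid citing Zhu's theorem, step three can equivalently be argued by observing that low tree-depth colorings of $G$ lift to low tree-depth colorings of $G^{[c]}$ using $c$ extra colors to distinguish the copies within each blown-up clique.
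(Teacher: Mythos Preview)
Your argument is correct. The paper does not prove this statement itself but quotes it from the literature; your reduction via the lexicographic blow-up $G^{[c]}$ (the paper's $G\bullet c$) is exactly the standard argument found in the cited references, and the paper even invokes the identity $\Minors^k_k(\Cc)=\Minors_k(\Cc\bullet k)$ later, in the proof of \cref{lem:converse}.
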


Finally, $H$ is a {\em{minor}} of $G$ if there is a congestion-$1$ depth-$\infty$ model of $H$ in $G$, that is, one where the branch sets have to be disjoint and we only require them to induce connected subgraphs.

\paragraph*{Weak coloring numbers.} An {\em{ordered graph}} is a graph $G$ equipped with a total order $\preceq$ on the vertex set of $G$. Let $(G,\preceq)$ be an ordered graph. For vertices $u,v\in V(G)$ with $u\preceq v$, we say that $u$ is {\em{weakly $d$-reachable}} from $v$ if in $G$ there is a path with endpoints $u$ and $v$ and of length at most $d$ such that every vertex of this path is not smaller than $u$ in $\preceq$. For $v\in V(G)$, we define the {\em{weak $d$-reachability set}} of $v$ as
\[\WReach^{G,\preceq}_d[v]\coloneqq \{u\colon \textrm{$u$ is weakly $d$-reachable from $v$ in $(G,\preceq)$}\}.\]
We may omit the superscript in notation if the ordered graph $(G,\preceq)$ is clear from the context.

The {\em{weak $d$-coloring number}} of an ordered graph $(G,\preceq)$ is the maximum size of a weak $d$-reachability~set:
\[\wcol_d(G,\preceq)\coloneqq \max_{v\in V(G)}|\WReach^{G,\preceq}_d[v]|\]
The weak $d$-coloring number of a graph $G$ is the minimum possible weak $d$-coloring number of its ordering:
\[\wcol_d(G)\coloneqq \min_{\preceq\colon \textrm{total order on }V(G)} \wcol_d(G,\preceq).\]
Finally, we may apply this definition to any graph class $\Cc$ by taking the supremum.
\[\wcol_d(\Cc)\coloneqq \sup_{G\in \Cc} \wcol_d(G).\]
Note that $\wcol_d(\Cc)=\infty$ when the weak $d$-coloring number is unbounded on the class $\Cc$. However, it turns out that the boundedness of all the weak coloring numbers characterizes bounded~expansion.

\begin{theorem}[\cite{Zhu09}]\label{prop:wcols-be}
	A graph class $\Cc$ has bounded expansion if and only if $\wcol_d(\Cc)$ is finite for every $d\in \N$.
\end{theorem}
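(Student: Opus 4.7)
The reverse direction (finiteness of $\wcol_d(\Cc)$ for every $d$ implies bounded expansion) admits a short \emph{projection-to-roots} argument. Fix $d\in \N$ and a depth-$d$ minor $H$ of some $G\in \Cc$ with model $\eta$. Set $k \coloneqq \wcol_{4d+1}(\Cc)$, which is finite by hypothesis, and fix an ordering $\preceq$ of $V(G)$ with $\wcol_{4d+1}(G,\preceq)\leq k$. For each $u\in V(H)$, let $r(u)\in \eta(u)$ be the $\preceq$-minimum vertex of $\eta(u)$; the map $r$ is injective because the branch sets of a depth-$d$ minor are pairwise disjoint. For every edge $uv\in E(H)$ with $r(u)\prec r(v)$, each of $G[\eta(u)]$ and $G[\eta(v)]$ has diameter at most $2d$ (as each has radius at most $d$), and the branch sets touch; concatenating two paths inside the branch sets with one touching edge (or a shared vertex) yields a path in $G$ of length at most $4d+1$ from $r(v)$ to $r(u)$ whose vertices lie entirely in $\eta(u)\cup \eta(v)$, hence are $\succeq r(u)$ by minimality of $r(u)$ and $r(v)$. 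This certifies $r(u)\in \WReach_{4d+1}^{G,\preceq}[r(v)]$, so charging each edge of $H$ to the larger of its two roots yields $|E(H)|\leq k\cdot |V(H)|$, witnessing bounded expansion.

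For the forward direction, assume $\Cc$ has bounded expansion with parameters $(p_d)_{d\in \N}$ and fix $d\in \N$. The plan is to construct, for each $G\in \Cc$, an ordering $\preceq$ with $\wcol_d(G,\preceq)$ bounded only in terms of $d$ and $p_d$. I would route through the auxiliary \emph{strong $d$-coloring number} $\mathsf{scol}_d$, defined as $\wcol_d$ but requiring the internal vertices of the reachability path to be larger than the \emph{endpoint} $v$ rather than the target $u$. First, one bounds $\mathsf{scol}_d(G)$ by a greedy construction: iteratively place at the largest still-unused position of $\preceq$ a vertex $v$ of low degree in an auxiliary depth-$d$ minor of the current graph, obtained by contracting BFS-balls of radius $d$ around the remaining vertices. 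By \cref{prop:cong-minors-be} and the definition of bounded expansion, this auxiliary minor has density bounded by $p_d$, so a vertex of degree at most $2p_d$ exists; a short argument then translates this bound into at most $\Oh(p_d)$ strongly $d$-reachable predecessors of $v$ in the final ordering. Second, one applies the inequality $\wcol_d(G) \leq \mathsf{scol}_d(G)^d$, proved by decomposing any weakly $d$-reachable path into at most $d$ strongly reachable hops, to conclude the bound on $\wcol_d$.

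The main obstacle is the greedy step for bounding $\mathsf{scol}_d$: one must ensure that the ``low-degree vertex'' chosen in the auxiliary minor really has few strongly $d$-reachable predecessors in the subgraph induced on the remaining vertices. The correct auxiliary structure is to contract BFS-balls of radius $d$ around each remaining vertex; a vertex of low degree in this minor has few ``ball-neighbors'', which are precisely the remaining vertices reachable from it via a short path through the current subgraph --- and the latter is exactly what governs strong reachability once the vertex is placed at the current top of $\preceq$. Once this degeneracy-based greedy argument is set up carefully, Zhu's inequality $\wcol_d \leq \mathsf{scol}_d^{\,d}$ and the overall conclusion follow by routine path-length and ordering bookkeeping.
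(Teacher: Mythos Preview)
The paper does not prove this theorem: it is stated with a citation to Zhu~\cite{Zhu09} and used as a black box. So there is no ``paper's own proof'' to compare against; I can only comment on your plan on its own merits.

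Your reverse direction is correct and is essentially the standard projection-to-roots argument; the length bound $4d+1$ and the minimality reasoning are right.

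Your forward direction has a genuine gap in the greedy step. You propose to ``contract BFS-balls of radius $d$ around each remaining vertex'' and then pick a vertex of low degree in the resulting structure. But balls around \emph{all} remaining vertices overlap arbitrarily, so this is not a depth-$d$ minor; and even viewed as a congested minor (which is presumably why you invoke \cref{prop:cong-minors-be}), the congestion is the maximum ball size, which is unbounded in a class of bounded expansion. Consequently there is no reason the auxiliary graph should have a low-degree vertex, and the step ``a vertex of low degree exists by the $p_d$ bound'' does not go through. Concretely, what you are really asking for is a remaining vertex $v$ with few remaining vertices at distance $\leq 2d+1$ in the current subgraph --- i.e., low degree in a power --- and bounded expansion does not give that directly.

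The standard fix, and the one Zhu uses, routes through \emph{$d$-admissibility} $\mathsf{adm}_d$ (the maximum, over $v$, of the largest family of internally vertex-disjoint paths of length $\leq d$ from $v$ to $\preceq$-smaller vertices through $\preceq$-larger internal vertices). A low-admissibility vertex \emph{does} exist in any subgraph of $G$ because such a disjoint family can be contracted into a genuine shallow minor with a high-degree vertex, contradicting the $p_d$ bound. One then bounds $\mathsf{scol}_d$ in terms of $\mathsf{adm}_d$ (each strongly reachable vertex is hit by one of the disjoint paths), and finally applies $\wcol_d \leq \mathsf{scol}_d^{\,d}$ as you do. Your overall architecture is right; only the ``auxiliary minor'' in the greedy step needs to be replaced by the admissibility/disjoint-paths argument.
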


\paragraph*{First-order logic and transductions.} We assume reader's familiarity with the standard first-order logic on graphs. Below we recall standard definitions and facts about transductions; see the recent survey~\cite{PilipczukSurvey} for a more thorough introduction. 

We will work with {\em{vertex-colored graphs}}. For a set of colors $\Cols$, by a {\em{$\Cols$-colored graph}} we mean an undirected graph $G$ with a vertex subset $U_C\subseteq V(G)$ distinguished, for each color $C\in \Cols$. Note that a vertex may belong to several such subsets $U_C$, or to no subset $U_C$ at all.  We model a $\Cols$-colored graph as a relational structure whose universe is the vertex set, there is one binary relation signifying adjacency, and one unary relation per color $C\in \Cols$ that selects all the vertices that belong to $U_C$. The {\em{signature}} of a vertex-colored graph $G$ is the signature of this relational structure, i.e., consisting of one binary relation and $p$ unary relations. This allows us to employ standard first-order logic on $\Cols$-colored graphs, where in atomic formulas one can check equality, adjacency, and color membership.

A first-order formula $\varphi(x,y)$ over $\Cols$-colored graphs is {\em{symmetric}} if $G\models \varphi(u,v)\Leftrightarrow \varphi(v,u)$ for every $\Cols$-colored graph $G$. 
For such a formula $\varphi$ and a $\Cols$-colored graph $G$, we define the (uncolored) graph $\varphi(G)$ as follows: the vertex set of $\varphi(G)$ is the same as that of $G$, and two distinct vertices $u,v$ are adjacent in $\varphi(G)$ if and only if $\varphi(u,v)$ holds in $G$.

A {\em{(first-order) transduction}} $\Tf$ consists of a set of colors $\Cols$ and a symmetric first-order formula $\varphi(x,y)$ over the signature of $\Cols$-colored graphs. In this paper we consider only first-order transductions, hence we call them simply transductions for brevity. Applying $\Tf$ to an (uncolored) graph $G$ yields a graph class $\Tf(G)$ defined as the set of all (uncolored) graphs $H$ that can be obtained through the following procedure:
\begin{itemize}
	\item For every color $C\in \Cols$, distinguish an arbitrary subset of $V(G)$ as $U_C$, thus extending $G$ to a $\Cols$-colored graph $G^+$. We call such $G^+$ a {\em{$\Cols$-expansion}} of $G$.
	\item Define $H'\coloneqq \phi(G^+)$.
	\item Output an arbitrary induced subgraph of $H'$ as $H$.
\end{itemize}
A transduction $\Tf$ can be applied to a graph class $\Cc$ by applying it to every member of $\Cc$ and taking the union of the results:
\[\Tf(\Cc)\coloneqq \bigcup_{G\in \Cc}\Tf(G).\]
Finally, we say that a graph class $\Dd$ is {\em{transducible}} from $\Cc$ if $\Dd\subseteq \Tf(\Cc)$ for some transduction $\Tf$. Following~\cite{PilipczukSurvey}, the intuition is as follows: $\Dd$ is transducible from $\Cc$ if every member of $\Dd$ can be encoded in a colored graph from $\Cc$ using a fixed, first-order mechanism expressed by $\varphi$.

It is known that transductions are compositional: the composition of two transductions is again a transduction (cf. \cite[Lemma~2]{PilipczukSurvey}). Therefore, the relation of transducibility induces a quasi-order --- a reflexive and transitive relation --- on graph classes.

We remark that often in the literature, one considers a slightly stronger notion of {\em{transductions with copying}}. Technical aspects differ, but for the purpose of this paper we may adopt the following definition: a graph class $\Dd$ is {\em{transducible}} with copying from a graph class $\Cc$ if there exists a transduction $\Tf$ (without copying) and a number $p\in \N$ such that $\Dd\subseteq \Tf(\Cc\bullet p)$, where $\Cc\bullet p\coloneqq \{G\bullet p\colon G\in \Cc\}$ and $G\bullet p$ is the {\em{$p$-blowup}} of $G$: the graph obtained from $G$ by replacing every vertex with a clique consisting of $p$ copies (and copies of adjacent vertices are again adjacent). In other words, the transduction may start not with a graph from $\Cc$, but with its $p$-blowup, for some constant $p$; intuitively, this allows interpreting multiple (up to $p$) vertices of the transduced graph $H$ in a single vertex of the source graph $G$.
It can be easily seen that the $G\bullet p$ is intertransducible with the graph obtained from $G$ by adding $p-1$ pendants (degree-$1$ vertices) to every vertex of $G$; that is, each of these graphs can be transduced from the other by a fixed transduction, and vice versa. Therefore, if a graph class $\Cc$ is closed under addition of degree-$1$ vertices, the notions of being transducible from $\Cc$ with copying and without copying do coincide. This will be the case for all the concrete graph classes considered in this paper, hence we will stick to the simpler notion of non-copying transductions. See also~\cite[Section~8]{BraunfeldNOS22transductions} for a more elaborate treatment of copying.

\section{Transductions and congested minors}

For a graph $G$, by $G^\circ$ we denote the graph obtained from $G$ by adding a {\em{universal vertex}}: a new vertex adjacent to all the other vertices. For a class $\Cc$, we denote $\Cc^{\circ}\coloneqq \{G^\circ\colon G\in \Cc\}$. 

The main goal of this section is to prove the following result, which intuitively says the following: among sparse graphs, transducibility implies containment as a congested shallow minor.

\begin{theorem}\label{thm:trans-shallow}
	Let $\Cc$ be a graph class of bounded expansion and $\Dd$ be a weakly sparse graph class transducible from $\Cc$. Then there exists $k\in \N$ such that $\Dd\subseteq \Minors^k_k(\Cc^\circ)$.
\end{theorem}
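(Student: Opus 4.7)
The plan is to decompose the transduction $\Tf$ via Gaifman's locality theorem into local short-range edges and global long-range edges, and then realize each in a congested shallow minor of $G^\circ$: local edges through short paths in $G$, global edges routed through the universal vertex.

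First, by Gaifman's locality theorem, the symmetric formula $\varphi(x,y)$ defining $\Tf$ is equivalent to a Boolean combination of $r$-local two-variable formulas $\psi_j(x,y)$ and basic local sentences, for some radius $r$. Over any $\Cols$-expansion $G^+$ of $G \in \Cc$, the basic local sentences have fixed truth values, so $\varphi$ reduces on $G^+$ to a Boolean combination of the $\psi_j$. I call an edge $\{u,v\}$ of $H \subseteq \varphi(G^+)$ \emph{local} if $d_{G^+}(u,v) \le 2r$, and \emph{global} otherwise. For global pairs, the $r$-neighborhoods of $u$ and $v$ in $G^+$ are disjoint, so each $\psi_j(u,v)$ is determined by the isomorphism types of the two $r$-neighborhoods separately; hence global edges are governed by a finite set $R$ of active type pairs: $\{u,v\}$ is a global edge iff $d_G(u,v)>2r$ and $(\tau(u),\tau(v)) \in R$, where $\tau$ denotes the $r$-type in $G^+$.

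Next, I use weak sparsity of $\Dd$ to bound the exceptional structure. Since $\Dd$ is $K_{t,t}$-free for some fixed $t$, consider any active pair $(\tau,\tau') \in R$: the bipartite graph of close pairs in $T_\tau \times T_{\tau'}$ (at $G$-distance $\le 2r$) is a subgraph of the $2r$-th power of $G^+$, which has bounded average degree by bounded expansion of $\Cc$, hence at most linearly many edges. If both $|T_\tau \cap V(H)|$ and $|T_{\tau'} \cap V(H)|$ exceeded a large threshold $N$, then the bipartite complement (all of whose edges are global, and hence belong to $H$) would have $\Omega(n^2)$ edges, and would contain a $K_{t,t}$ by the Kővári–Sós–Turán theorem — contradicting $K_{t,t}$-freeness of $H$. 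The diagonal case $\tau=\tau'$ is analogous via the unipartite version. Taking the union over the finitely many active pairs, we obtain a bounded exceptional set $S^* \subseteq V(H)$ such that every global edge of $H$ is incident to $S^*$.

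For the model construction, fix an ordering $\preceq$ of $V(G)$ with $\wcol_{2r}(G,\preceq) \le c$, which exists by \cref{prop:wcols-be}. For each $u \in V(H)$, let $\eta(u)$ consist of $\{u\}$ together with a coherently chosen family of monotone paths of length $\le 2r$ from $u$ to each $m \in \WReach_{2r}[u]$ — e.g., obtained by fixing, for each $m \in V(G)$, a BFS tree rooted at $m$ inside the subgraph induced by $\{v:v \succeq m\}$ and taking tree paths — and additionally include the universal vertex $w$ of $G^\circ$ if $u \in S^*$. Each $\eta(u)$ is connected, has radius $O(r)$ in $G^\circ$, and bounded size. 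Local edges are captured because any shortest $u$–$v$ path of length $\le 2r$ has a $\preceq$-minimum $m$ weakly $2r$-reachable from both endpoints, and the coherent path choice forces $m$ into both $\eta(u)$ and $\eta(v)$. Global edges are captured because at least one endpoint lies in $S^*$, whose branch set contains $w$, and $w$ is adjacent to every other vertex in $G^\circ$.

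The main obstacle is bounding the congestion. At $w$ the congestion is $|S^*|$, which is a bounded constant. At $x \in V(G)$, if $x \in \eta(u)$ via a monotone path to a midpoint $m \preceq x$, then $m \in \WReach_{2r}[x]$, giving at most $c$ candidate midpoints. The delicate step is to bound, for each fixed $m$, the number of $u$'s whose canonical $u$-to-$m$ path passes through $x$; this is controlled by the coherent (BFS-tree) path choice, combined with bounded expansion of $\Cc$ and, if needed, the $K_{t,t}$-freeness of $H$ to bound the branching at routing vertices. Putting the bounds together yields a constant $k$ for which $\eta$ realizes $H$ as a depth-$k$, congestion-$k$ minor of $G^\circ$, as required.
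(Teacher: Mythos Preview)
Your high-level decomposition via Gaifman locality into local and global edges is sound, and routing global edges through the universal vertex is a natural idea. However, the proof has two genuine gaps, one in each of the last two paragraphs.

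\textbf{The bound on $S^*$.} You claim that the bipartite graph of ``close pairs'' in $T_\tau \times T_{\tau'}$ has linearly many edges because it sits inside the $2r$-th power of $G$, which you assert has bounded average degree. This is false: if $G$ is a star with $n$ leaves (a tree, hence of bounded expansion), then $G^2$ is the complete graph $K_{n+1}$, and the close-pairs graph on the leaves is complete. So your density argument via K\H{o}v\'ari--S\'os--Tur\'an does not go through as stated. (The conclusion that global edges admit a bounded vertex cover may still be salvageable, but it requires a different argument.)

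\textbf{The congestion bound.} Even granting a bounded $S^*$, your branch sets are too large. You let \emph{every} $u\in V(H)$ include monotone paths to \emph{all} of its weak $2r$-reachable midpoints. In the star again, with the center $c$ placed first in $\preceq$, every leaf $\ell$ has $c\in \WReach_{2r}[\ell]$, so $c\in \eta(\ell)$ for every leaf in $V(H)$; the congestion at $c$ is then $|V(H)\cap\text{leaves}|$, which is unbounded. Your appeal to ``bounded branching at routing vertices'' via BFS trees cannot help: vertices in bounded-expansion classes may have arbitrarily high degree, and $K_{t,t}$-freeness of $H$ says nothing about degrees in $G$. The number of $u$'s whose canonical $u$-to-$m$ path passes through a given $x$ is simply not bounded by any function of $r$, $\wcol_{2r}(G)$, and~$t$.

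The paper's proof avoids exactly this trap. Rather than letting every $u$ claim all of its midpoints, it first proves (via a Ramsey argument combined with a Bollob\'as-type lemma, using the Local Feferman--Vaught Theorem in place of raw Gaifman locality) that for each potential midpoint $w$ there is a \emph{bounded} set $X_w\subseteq \WReach_d^{-1}[w]$ that vertex-covers all $H$-edges routed through $w$. Only those designated $u\in X_w$ receive $\WReach_d^{-1}[w]$ in their branch set. Congestion at any $v$ is then at most $|\WReach_d[v]|\cdot \max_w|X_w|$, which is bounded. This per-root vertex cover is the missing idea in your construction.
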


A few remarks are in order. First, since the property of having bounded expansion is preserved both by the operations of adding a universal vertex (easy from the definition) and of taking a congested shallow minor (\cref{prop:cong-minors-be}), from \cref{thm:trans-shallow} it follows that every weakly sparse class transducible from a class of bounded expansion also has bounded expansion. This fact was, however, already known, as it easily follows from the results of~\cite{GajarskyKNMPST20}.

Second, one might wonder whether the addition of a universal vertex is necessary in \cref{thm:trans-shallow}. It is: the class of stars is transducible from the class of edgeless graphs, but is not contained in their $k$-congested depth-$k$ minors. It turns out that adding a single universal vertex suffices to solve this issue.

Third, the converse of \cref{thm:trans-shallow} is also true if we allow copying: for every $k\in \N$ and a graph class $\Cc$ of bounded expansion, the class $\Minors^k_k(\Cc^\circ)$ is transducible with copying from $\Cc$. Since the statement of \cref{thm:trans-shallow} can be easily lifted to transductions with copying as well, this provides a purely combinatorial characterization of transducibility with copying among graph classes of bounded expansion. We postpone this discussion to the end of this section, and for now we focus on the proof of \cref{thm:trans-shallow}.


\medskip

The main tool that will be used in the proof of \cref{thm:trans-shallow} is the Local Feferman-Vaught Theorem, proposed in~\cite{Dreier23,PilipczukST18a}. We will use the formulation below that is adjusted to our terminology. This formulation follows easily from the more descriptive ones presented in~\cite{Dreier23,PilipczukST18a}. 

We say that two vertices $u,v$ of a graph $G$ are {\em{$d$-separated}} by a vertex set $S$ if every path of length at most $d$ that connects $u$ and $v$ must necessarily intersect $S$. Note that this definition also applies to the case when $u$ or $v$ belongs to $S$, in which case the condition is always satisfied.

\begin{theorem}[see {\cite[Theorem 1.12]{Dreier23}} and {\cite[Lemma~3.1]{PilipczukST18a}}]\label{thm:local-FV}
	Let $s\in \N$, $\Cols$ be a finite set of colors, and $\phi(x,y)$ be a symmetric first-order formula over the signature of $\Cols$-colored graphs. Then there exists a constant $d\in \N$, depending only on the quantifier rank of $\phi$, and a finite set of colors $\Lambda$, depending on $s$, $\Cols$, and $\phi$, such that the following holds. For every $\Cols$-colored graph $G$ and a vertex subset $S\subseteq V(G)$ with $|S|\leq s$, there is a coloring $\lambda\colon V(G)\to \Lambda$ such that for every pair of vertices $u,v\in V(G)$ that are $d$-separated by $S$, whether $\phi(u,v)$ holds in $G$ depends only on the pair of colors $(\lambda(u),\lambda(v))$. In other words, there is a set $R\subseteq \Lambda\times \Lambda$ such that for all $u,v\in V(G)$ that are $d$-separated by~$S$, we have
	\[G\models \phi(u,v)\qquad\textrm{if and only if}\qquad (\lambda(u),\lambda(v))\in R.\]
\end{theorem}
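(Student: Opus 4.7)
The plan is to combine first-order locality with an Ehrenfeucht-Fra\"iss\'e composition argument along the common boundary $S$. Let $q$ be the quantifier rank of $\varphi$. I would take $d \coloneqq 2 \cdot 7^q$, comfortably exceeding the Gaifman radius of rank-$q$ formulas, and pick $q' \ge q$ to be a sufficiently high rank (depending on $q$ and $s$) at which first-order types over the pieces of a ``sum along $S$'' compose to the rank-$q$ type of the global structure. The vertices of $S$ are treated as $k \le s$ named constants $s_1,\ldots,s_k$ added to the signature. Since the signature, the rank $q'$, and the number of constants are all bounded in terms of $s,\Cols,\varphi$, the number of realizable rank-$q'$ FO-types over pointed $\Cols$-colored structures with $k$ constants is finite; this finite set serves as $\Lambda$, and the coloring is defined by $\lambda(v) \coloneqq \mathrm{tp}^{q'}(G, v, s_1,\ldots,s_k)$.

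The central claim is: for every pair $u, v$ that is $d$-separated by $S$ in $G$, the rank-$q$ FO-type $\mathrm{tp}^q(G, u, v, s_1,\ldots,s_k)$ is determined by the pair $(\lambda(u),\lambda(v))$. Once this is established, the set $R \subseteq \Lambda \times \Lambda$ of color pairs whose composed pair-type satisfies $\varphi$ is well-defined, and $G \models \varphi(u,v) \Leftrightarrow (\lambda(u),\lambda(v)) \in R$ for all such $u,v$, as required. I would argue the claim via Ehrenfeucht-Fra\"iss\'e games: take two $d$-separated pairs $(u_1, v_1)$ and $(u_2, v_2)$ in $G$ with $\lambda(u_1) = \lambda(u_2)$ and $\lambda(v_1) = \lambda(v_2)$, and show that Player II has a winning strategy in the $q$-round EF game on $(G, u_1, v_1, \bar s)$ vs.\ $(G, u_2, v_2, \bar s)$. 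The color equalities yield winning strategies $\sigma_u, \sigma_v$ for Player II in the $q'$-round EF games on $(G, u_1, \bar s)$ vs.\ $(G, u_2, \bar s)$ and on $(G, v_1, \bar s)$ vs.\ $(G, v_2, \bar s)$, respectively; these are spliced into a single strategy in the pair-game by classifying each move of Player I as ``near the $u$-side'' or ``near the $v$-side'' according to whether the played vertex lies in the $d/2$-ball around $u_i$ or $v_i$ in $G - S$ (these balls are disjoint by $d$-separation) and replying using the corresponding sub-strategy. Initial partial isomorphism is immediate: $d$-separation forbids $u_i \sim v_i$ in $G$ (an edge would give a length-$1$ $u_i$-$v_i$ path not meeting $S$), while adjacencies between $u_i,\bar s$ and between $v_i,\bar s$ match by the color equalities.

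The main obstacle is making the splicing of $\sigma_u$ and $\sigma_v$ rigorous, in particular for moves of Player I that lie far from both sides, where the two sub-strategies might make incompatible choices in the ``remainder'' region of $G$. The standard resolution is to apply the Feferman-Vaught composition theorem for generalized sums (Shelah): view $G$ as the sum along $S$ of three pieces -- the two $d/2$-balls around $u_i$ and $v_i$ in $G - S$ together with $S$, and the remainder together with $S$ -- and read off from this decomposition that the rank-$q$ FO-type of the sum together with two distinguished points is a fixed function of the rank-$q'$ FO-types of the individual pieces, for an appropriately chosen $q' = q'(q, s)$. Carrying out this splicing precisely is the most technical part of the proof, but the machinery is by now standard in the locality literature: the formulations developed in \cite{Dreier23} and \cite{PilipczukST18a} are tailored for exactly this purpose and yield \cref{thm:local-FV} with essentially cosmetic reformulation.
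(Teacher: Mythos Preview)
The paper does not prove \cref{thm:local-FV}; it is stated with attribution to \cite[Theorem~1.12]{Dreier23} and \cite[Lemma~3.1]{PilipczukST18a} and used as a black box. There is therefore no in-paper argument to compare your sketch against.

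As a standalone sketch, your plan follows the standard route and is essentially sound: name $S$ as constants, invoke Gaifman locality so that $\varphi$ becomes a Boolean combination of basic local sentences and $r$-local formulas with $r\le 7^q$, observe that $d$-separation by $S$ forces $u$ and $v$ to be at distance greater than $d$ in $G-S$, and then compose the local $r$-ball types of $u$ and $v$ (with $S$ as constants and $S$-adjacencies recorded as unary marks) via Feferman--Vaught. The part you flag as the ``main obstacle'' is real but your proposed fix is slightly off: splicing two global strategies $\sigma_u,\sigma_v$ for the whole structure $(G,u_i,\bar s)$ versus $(G,u_j,\bar s)$ does not straightforwardly yield a strategy for the pair game, because both $\sigma_u$ and $\sigma_v$ may want to control the same ``remainder'' region. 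The clean resolution is not to splice global strategies but to pass to the modified structure where edges incident to $S$ are replaced by unary predicates (so $S$ becomes isolated), apply Gaifman there, and note that for $d$-separated pairs the $r$-ball around $\{u,v\}$ is the disjoint union of the $r$-balls around $u$ and $v$; then ordinary Feferman--Vaught for disjoint unions applies with no three-piece amalgamation needed. Your choice $\lambda(v)=\mathrm{tp}^{q'}(G,v,\bar s)$ is fine, since this global type determines the local $r$-ball type for suitable $q'$, and the basic local sentences are fixed once $G$ is fixed.
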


Let us now formulate the key technical lemma towards the proof of \cref{thm:trans-shallow}.

\begin{lemma}\label{lem:cong-on-wcols}
	Let $\Cols$ be a finite set of colors, $\phi(x,y)$ be a symmetric first-order formula over the signature of $\Cols$-colored graphs, and $d\in \N$ be the constant given by \cref{thm:local-FV} for $\phi$. Next, let $G$ be a $\Cols$-colored graph and $\preceq$ be a total order on $V(G)$ so that the following two conditions are satisfied: $\wcol_{2d}(G,\preceq)\leq s$ for some $s\in \N$, and we have $\WReach^{G,\preceq}_d[u]\cap \WReach^{G,\preceq}_d[v]\neq \emptyset$ for all $u,v\in V(G)$. Finally, suppose $\phi(G)$ does not contain the complete bipartite graph $K_{t,t}$ as a subgraph, for some $t\in \N$. Then there exists $k\in \N$, depending only on $\Cols$, $\phi$, $d$, $s$, and $t$, such that $\phi(G)$ is a $k$-congested depth-$2d$ minor of $G$. 
\end{lemma}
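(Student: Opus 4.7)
\emph{Proof plan.} The goal is to construct a model $\eta \colon V(G) \to 2^{V(G)}$ of $\phi(G)$ in $G$ of depth at most $2d$ and bounded congestion. I combine three ingredients: the pairwise intersection of $\WReach_d$-sets (to furnish touchings), the Local Feferman--Vaught Theorem (to classify $\phi$-edges via a finite coloring), and the $K_{t,t}$-freeness of $\phi(G)$ (to bound congestion). Note first that the intersection assumption already implies that $G$ has diameter at most $2d$: a common $w \in \WReach_d[u] \cap \WReach_d[v]$ yields a path $u \to w \to v$ of length at most $2d$. I would then apply \cref{thm:local-FV} to $\phi$ with separator $S = \emptyset$, obtaining a finite color set $\Lambda$, a coloring $\lambda \colon V(G) \to \Lambda$, and a relation $R \subseteq \Lambda \times \Lambda$ such that for $u, v$ at distance greater than $d$ in $G$, we have $G \models \phi(u, v)$ iff $(\lambda(u), \lambda(v)) \in R$.

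For each $v \in V(G)$ and each $u \in \WReach_d[v]$, fix a monotone path $P_{v, u}$ from $v$ to $u$ of length at most $d$, all of whose vertices are $\succeq u$. For every $\phi$-edge $\{u_1, u_2\}$, select a witness $w(u_1, u_2) \in \WReach_d[u_1] \cap \WReach_d[u_2]$, which is nonempty by assumption. Define
\[
\eta(v) \coloneqq \{v\} \cup \bigcup_{u \,:\, \phi(v, u)} V(P_{v, w(v, u)}).
\]
Each $G[\eta(v)]$ is connected with radius at most $d$ from $v$, so the resulting model has depth at most $d \leq 2d$. For every $\phi$-edge $\{u_1, u_2\}$, the witness $w(u_1, u_2)$ belongs to both $\eta(u_1)$ and $\eta(u_2)$, so their branch sets touch; hence $\eta$ is a valid model of $\phi(G)$.

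It remains to bound the congestion, which is the crucial step. Fix $x \in V(G)$ and count $v \in V(G)$ with $x \in \eta(v)$. If $x \in \eta(v) \setminus \{v\}$, then $x$ lies on $P_{v, w(v, u)}$ for some $\phi$-neighbor $u$ of $v$; the subpath from $x$ to $w(v, u)$ witnesses that $w(v, u) \in \WReach_d[x]$, giving at most $|\WReach_d[x]| \leq s$ candidate witnesses. Fixing a candidate witness $w$, I would count the pairs $(v, u)$ with $\phi(v, u)$, $w = w(v, u)$, and $x \in V(P_{v, w})$, stratified by color pair $(\lambda(v), \lambda(u)) \in \Lambda^2$. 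For color pairs $(c, c') \in R$, Local FV determines $\phi$ on all pairs at distance $> d$ in $G$ (so they form a biclique structure between color classes), and the $K_{t,t}$-freeness of $\phi(G)$ combined with a K\H{o}v\'ari--S\'os--Tur\'an type argument bounds the number of such pairs sharing a single witness $w$; pairs at distance $\leq d$ can be treated analogously. The main obstacle is closing this congestion estimate uniformly: a naive enumeration scales with the size of $B^+_d(w) = \{v : w \in \WReach_d[v]\}$, which is not controlled by the parameters. Overcoming this requires carefully combining the finite color classification from Local FV with biclique-freeness --- and perhaps a judicious choice of witnesses, e.g.\ $w(u_1, u_2)$ defined as the $\preceq$-minimum element of $\WReach_d[u_1] \cap \WReach_d[u_2]$ --- so as to yield a uniform bound $k$ depending only on $|\Cols|$, $\phi$, $d$, $s$, and $t$.
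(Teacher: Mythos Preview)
Your overall architecture is the same as the paper's --- build branch sets out of weak-reachability paths toward a chosen ``witness'' in $\WReach_d[u]\cap \WReach_d[v]$, then bound congestion --- and you correctly identify the congestion bound as the crux. However, the sketch you give for that bound does not close, and the gap is exactly where the paper invests its real work.

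The specific problem is your single application of \cref{thm:local-FV} with separator $S=\emptyset$. This only controls $\phi$ on pairs at distance greater than $d$ in $G$. But when you fix a witness $w$ and enumerate the $v$'s with $w\in\WReach_d[v]$, all such $v$'s lie within distance $d$ of $w$, hence within distance $2d$ of one another --- and many of them can be at distance at most $d$. For those pairs your color classification says nothing, so there is no biclique structure to contradict $K_{t,t}$-freeness, and your ``pairs at distance $\le d$ can be treated analogously'' is not supported. A K\H{o}v\'ari--S\'os--Tur\'an argument alone cannot rescue this, because you have no control over $\phi$ on the short-distance pairs.

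The paper's remedy is to apply \cref{thm:local-FV} not once with $S=\emptyset$ but once for each possible separator $S=\WReach_d[u]\cap\WReach_d[v]$; this set always $d$-separates $u$ from $v$ (their Claim~3.5), and it has size at most $s$ because $S\subseteq\WReach_{2d}[\max S]$. Edges of $\phi(G)$ are then bucketed by the pair $(S,\alpha,\beta)$ rather than just $(\alpha,\beta)$. Within a bucket, a maximal matching is taken; if it were large, Ramsey's theorem is applied to the auxiliary graph recording whether $\WReach_d[u_i]\cap\WReach_d[v_j]$ equals $S$ or strictly contains it. The ``equals $S$'' side yields a genuine $K_{t,t}$ via Local~FV, while the ``strictly contains'' side is killed by a Bollob\'as-type set-pair inequality on the sets $\WReach_d[u_i]\setminus S$ and $\WReach_d[v_i]\setminus S$. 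This produces, for each $w$, a bounded-size vertex cover $X_w$ of the $\phi$-edges routed through $w$, and the branch sets are then defined as unions of \emph{whole} sets $\WReach_d^{-1}[w]$ indexed by $w$ with $u\in X_w$, rather than unions of individual paths. These two ideas --- per-$S$ applications of Local~FV, and the Ramsey/Bollob\'as dichotomy --- are precisely what is missing from your proposal.
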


Before we prove \cref{lem:cong-on-wcols}, let us derive \cref{thm:trans-shallow} from it.

\begin{proof}[Proof of \cref{thm:trans-shallow} using \cref{lem:cong-on-wcols}]

	Adding a universal vertex to a graph can increase the numbers $p_d$ from the definition of bounded expansion by at most $1$, hence the class $\Cc^\circ$ also has bounded expansion. Moreover, clearly $\Cc$ is transducible from $\Cc^\circ$, by a transduction that simply drops the added universal vertex. Hence, by composing this transduction with the original transduction that produces $\Dd$ from $\Cc$, we may assume that $\Dd$ is transduced from $\Cc^\circ$ by a transduction $\Tf$. Our goal is to prove that $\Dd\subseteq \Minors^k_k(\Cc^\circ)$, for some $k\in \N$.  As $\Dd$ is weakly sparse, there exists $t\in \N$ such that no graph in $\Dd$ contains $K_{t,t}$ as a subgraph.
	
	Let $\Cols$ and $\phi(x,y)$ be the set of colors and the formula used by transduction $\Tf$.
	For each graph $H\in \Dd$, we fix a $\Cols$-colored graph $G_H^+$ such that $H$ is an induced subgraph of $\phi(G_H^+)$ and $G_H^+$ is a \mbox{$\Cols$-expansion} of a graph $G_H$ belonging to $\Cc^\circ$. We may additionally assume that $G_H$ can be obtained by taking a graph belonging to $\Cc$ and adding a universal vertex $v$ to it.
	
	We now argue that by making a minor modification to $\Cols$ and $\phi(x,y)$, we may assume without loss of generality that no member of the class $\{\phi(G_H^+)\colon H\in \Dd\}$ contains $K_{t,t}$ as a subgraph. Indeed, we may add to $\Cols$ an additional color, say $A$, and let $A$ in each graph $G_H^+$ mark the vertices that are not removed when passing from $\varphi(G_H^+)$ to its induced subgraph $H$. We also amend $\phi(x,y)$ by replacing it with $\phi(x,y)\wedge (x\in A)\wedge (y\in A)$, thus making sure that in $\phi(G_H^+)$ all the vertices that do not participate in the vertex set of $H$ are isolated. Thus, $\phi(G_H^+)$ is just $H$ with possibly some isolated vertices added. And since $H\in \Dd$ does not contain $K_{t,t}$ as a subgraph, neither does $\phi(G_H^+)$.
	
	With the assumption above made, we proceed with the proof.
	Let $d\in \N$ be the constant provided by \cref{thm:local-FV} for $\phi$. We set
	\[s\coloneqq \wcol_{2d}(\Cc)+1,\]
	which is finite by \cref{prop:wcols-be}, because $\Cc$ is assumed to have bounded expansion. 
	
	
	Consider now any graph $H\in \Dd$ and the corresponding graph $G_H\in \Cc^\circ$. Let $v$ be a universal vertex of $G_H$. Then there is a total order on the vertices of $G_H - v$ whose weak $2d$-coloring number is at most $\wcol_{2d}(\Cc)$. By placing $v$ at the front of this order, we obtain a total order $\preceq$ on $V(G_H)$ whose weak $2d$-coloring number is bounded by $\wcol_{2d}(\Cc)+1=s$. Moreover, in $\preceq$ every pair of weak $d$-reachability sets intersects, since $v \in \WReach_d[u]$ for every $u \in V(G_H)$.
	
	Since $\phi(G_H^+)$ does not contain $K_{t,t}$ as a subgraph, we may now apply \cref{lem:cong-on-wcols} to $G_H^+$ ordered by $\preceq$ to conclude that $\varphi(G_H^+)$ is a $k$-congested depth-$k$ minor of $G_H^+$, for some $k\in \N$ that depends on $\Cols,\phi,d,s,t$ (with $d$ and $s$ depending on $\Cc$ and $(\Cols,\phi)=\Tf$, and $t$ depending on $\Dd$). Since $G_H^+$ is a $\Cols$-expansion of $G_H$ and $H$ is an induced subgraph of $\varphi(G_H^+)$, we conclude that every $H\in \Dd$ is a $k$-congested depth-$k$ minor of the corresponding $G_H\in \Cc^\circ$.
\end{proof}

We now proceed to the proof of \cref{lem:cong-on-wcols}. In its proof we will use the following Bollob\'as-type observation.
\begin{lemma}\label{cl:Bollobas}
	Let $A_1,\ldots,A_n$ and $B_1,\ldots,B_n$ be sequences of sets such that
	\begin{itemize}
		\item $|A_i|\leq a$, $|B_i|\leq b$, and $A_i\cap B_i=\emptyset$ for all $1\leq i\leq n$, and
		\item $A_i\cap B_j\neq \emptyset$ for all $1\leq i<j\leq n$.
	\end{itemize}
Then $n\leq b^0+b^1+\ldots+b^a$.
\end{lemma}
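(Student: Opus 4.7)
The plan is to prove the lemma by induction on $a$, the upper bound on $|A_i|$, with the inductive step peeling off one carefully chosen element from every set in a well-chosen sub-family. The base case $a = 0$ forces every $A_i$ to be empty, which makes the intersection hypothesis $A_i \cap B_j \neq \emptyset$ impossible for $i < j$, so $n \leq 1 = b^0$.

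For the inductive step, I would look at the last set $B_n$. The hypothesis gives $A_i \cap B_n \neq \emptyset$ for every $i \in \{1,\dots,n-1\}$, so I can choose for each such $i$ a witness $\sigma(i) \in A_i \cap B_n$. Since $|B_n| \leq b$, pigeonhole produces an element $x \in B_n$ and a set of indices $I \subseteq \{1,\dots,n-1\}$ with $|I| \geq (n-1)/b$ such that $x \in A_i$ for every $i \in I$. I would now restrict to the sub-family $(A_i, B_i)_{i \in I}$ and replace each $A_i$ by $A'_i \coloneqq A_i \setminus \{x\}$, so that $|A'_i| \leq a - 1$.

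The key point to verify is that this restricted family still satisfies the intersection hypotheses, so induction on $a$ applies. The disjointness $A'_i \cap B_i = \emptyset$ is immediate from $A_i \cap B_i = \emptyset$. For $i < j$ in $I$, I need $A'_i \cap B_j \neq \emptyset$; the only way removing $x$ could kill $A_i \cap B_j$ is if $A_i \cap B_j \subseteq \{x\}$, which in turn requires $x \in B_j$. But for $j \in I$ we chose $x$ so that $x \in A_j$, and then the original disjointness $A_j \cap B_j = \emptyset$ forces $x \notin B_j$. So the inductive hypothesis yields $|I| \leq b^0 + b^1 + \cdots + b^{a-1}$, and combined with $|I| \geq (n-1)/b$ this gives $n \leq 1 + b\sum_{k=0}^{a-1} b^k = \sum_{k=0}^{a} b^k$, which is the desired bound.

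The main obstacle is precisely the verification that the restricted family still satisfies the intersection hypothesis after removing $x$; this is what forces the choice of pigeonholing on $B_n$ (the last index) rather than on some arbitrary $B_i$, because one needs the chosen element $x$ to lie in $A_j$ for every remaining index $j$ in order to use $A_j \cap B_j = \emptyset$ to conclude $x \notin B_j$. Everything else is bookkeeping with the geometric-series bound.
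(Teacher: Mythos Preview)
Your proof is correct and follows essentially the same route as the paper's: induct on $a$, pigeonhole on $B_n$ to find an element $x$ lying in many $A_i$'s, observe that $x\notin B_j$ for each selected index $j$ (via $x\in A_j$ and $A_j\cap B_j=\emptyset$), strip $x$ from those $A_i$'s, and apply the induction hypothesis. The only cosmetic differences are that the paper phrases the induction step as a proof by contradiction and separately dispatches the degenerate case $b=0$, which in your argument is implicitly handled since $n\ge 2$ forces $B_n\neq\emptyset$.
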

\begin{proof}
	We fix $b$ and proceed by induction on $a$. The case $a=0$ is straightforward: We must have $A_1 = \emptyset$, and if $n> 1$, then $B_n \cap A_0 = \emptyset$ contradicts the assumptions. Hence we have $n \le 1$ in this case, as desired.
	The case $b=0$ is analogous. Hence, we assume that $a > 0$ and $b > 0$.
	
	For the induction step, suppose for contradiction that $n> b^0+b^1+\ldots+b^a$. Note that $B_n$ intersects all the sets $A_1,\ldots,A_{n-1}$. Since $|B_n|\leq b$, there is an element $x\in B_n$ that belongs to at least $\frac{n-1}{b}> b^0+b^1+\ldots +b^{a-1}$ sets $A_i$, $i\in \{1,\ldots,n-1\}$. Note  also that $x$ does not belong to any of the corresponding sets $B_i$, as each of them is disjoint with $A_i$. Hence, by passing to subsequences consisting of those $A_i$s that contain $x$ and the corresponding $B_i$s, and removing $x$ from all the considered $A_i$s, we find a pair of sequences of sets to which we may apply the induction assumption for parameters $a-1$ and $b$. This induction assumption tells us that the length of the sequences is at most $b^0+b^1+\ldots+b^{a-1}$; a contradiction.
\end{proof}

\begin{proof}[Proof of \cref{lem:cong-on-wcols}]	
Let us start with some notation.  We fix the graph $G$ and the order $\preceq$ for the rest of the proof, hence we will omit them from superscripts in the notation. We define the inverse weak $d$-reachability set of a vertex $v\in V(G)$ as follows:
\[\WReach^{-1}_d[v]\coloneqq \{u\in V(G)~|~v\in \WReach_d[u]\}.\]
For a nonempty set of vertices $X$, by $\max X$ we denote the $\preceq$-largest element of $X$. For a vertex $v\in V(G)$, we define $\Ss_v$ to be the set of all the subsets of $\WReach_{2d}[v]$ that contain $v$. Note that as $|\WReach_{2d}[v]|\leq s$, we have $|\Ss_v|\leq 2^{s-1}$. We also let $\Ss\coloneqq \bigcup_{v\in V(G)} \Ss_v$.
Finally, we denote $H\coloneqq \phi(G)$ for brevity and we let $\Lambda$ be the finite set of colors provided by \cref{thm:local-FV} for $\Cols$, $\phi$, and $s$. We may assume without loss of generality that $s\geq 2$.

Recalling that $V(H)=V(G)$, we define mappings $\sigma\colon E(H)\to \Ss$ and $\rho\colon E(H)\to V(G)$ as follows:
\[\sigma(uv)\coloneqq \WReach_d[u]\cap \WReach_d[v]\qquad\textrm{and}\qquad \rho(uv)\coloneqq \max \sigma(uv).\]
To see that this definition is valid, note that $\sigma(uv)=\WReach_d[u]\cap \WReach_d[v]$ is nonempty by our assumption about $G$ and $\preceq$. Then $\sigma(uv)$ is contained in the weak $2d$-reachability set of $\max \sigma(uv)=\rho(uv)$, because for every $w\in \sigma(uv)$, the concatenation of a path witnessing $\rho(uv)\in \WReach_d[u]$ and a path witnessing $w\in \WReach_d[u]$ is a walk that witnesses that $w\in \WReach_{2d}[\rho(uv)]$. As $\rho(uv)\in \sigma(uv)$ by definition, we have $\sigma(uv)\in \Ss_{\rho(uv)}\subseteq \Ss$.

Let us also note the following standard claim:
\begin{claim}\label{cl:local-sep}
	For every pair of vertices $u,v\in V(G)$, $u$ and $v$ are $d$-separated in $G$ by $\WReach_d[u]\cap \WReach_d[v]$.
\end{claim}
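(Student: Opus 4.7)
The plan is to prove this by the standard ``minimum-on-the-path'' argument, which is the folklore motivation behind weak reachability sets. Fix an arbitrary walk (or path) $P = (u = w_0, w_1, \ldots, w_\ell = v)$ in $G$ of length $\ell \leq d$ connecting $u$ and $v$; I must show that $P$ intersects $\WReach_d[u] \cap \WReach_d[v]$.

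Let $w \in V(P)$ be the $\preceq$-minimum vertex occurring on $P$. I claim that $w$ lies in both weak $d$-reachability sets. Indeed, consider the sub-path $P_1$ of $P$ from $w$ to $u$: its length is at most $\ell \leq d$, every vertex on $P_1$ lies on $P$ hence is $\succeq w$ by choice of $w$, and $w \preceq u$ since $u$ lies on $P$. By definition, this means $w \in \WReach_d[u]$. The same argument applied to the sub-path from $w$ to $v$ yields $w \in \WReach_d[v]$. Hence $w \in \WReach_d[u] \cap \WReach_d[v]$, so $P$ meets this set as required.

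Since $P$ was arbitrary, every path of length at most $d$ between $u$ and $v$ passes through $\WReach_d[u] \cap \WReach_d[v]$, which is exactly the definition of $d$-separation. The boundary cases where $u$ or $v$ themselves lie in the set are subsumed by the remark following the definition of $d$-separation (and in fact also covered by the argument above, taking $w \in \{u,v\}$ when appropriate). There is no real obstacle here; the only thing to be careful about is matching the direction convention in the definition of weak reachability (that $u$ is weakly $d$-reachable from $v$ only when $u \preceq v$), which is handled automatically by choosing $w$ to be the $\preceq$-minimum on $P$.
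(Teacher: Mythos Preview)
Your proof is correct and follows essentially the same approach as the paper's: pick the $\preceq$-minimum vertex $w$ on an arbitrary path of length at most $d$ between $u$ and $v$, and observe that the two subpaths witness $w\in \WReach_d[u]$ and $w\in \WReach_d[v]$.
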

\begin{claimproof}
	Let $P$ be any path of length at most $d$ connecting $u$ and $v$, and let $w$ be the $\preceq$-minimum vertex lying on $P$. Then the subpath of $P$ between $u$ and $w$ witnesses that $w\in \WReach_d[u]$, and the subpath between $v$ and $w$ witnesses that $w\in \WReach_d[v]$. Thus $w\in \WReach_d[u]\cap \WReach_d[v]$.
\end{claimproof}

We will now work towards the following goal.

\begin{claim}\label{cl:vertex-cover}
	For every vertex $w$ of $G$, there exists a set $X_w\subseteq \WReach^{-1}_d[w]$ such that
	\begin{itemize}
		\item $|X_w|\leq f(s,t,|\Lambda|)$ for some function $f\colon \N^3\to \N$, and
		\item for every edge $uv\in \rho^{-1}(w)\subseteq E(H)$, at least one endpoint of $uv$ belongs to $X_w$.
	\end{itemize}
\end{claim}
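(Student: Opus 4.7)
The plan is to apply \cref{thm:local-FV} once with separator $S=\WReach_{2d}[w]$, which has size at most $s$, and then bound the size of a maximum matching in the graph $(V(G),\rho^{-1}(w))$ using a combination of \cref{cl:Bollobas} and a $4$-color Ramsey argument. The set $X_w$ will consist of the endpoints of this matching: this set lies in $\WReach^{-1}_d[w]$ automatically (since $w\in \WReach_d[u]\cap \WReach_d[v]$ for every $uv\in \rho^{-1}(w)$), and it covers every edge of $\rho^{-1}(w)$ by the maximality of the matching.

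Concretely, I would first invoke \cref{thm:local-FV} to obtain a coloring $\lambda\colon V(G)\to\Lambda$ and a relation $R\subseteq\Lambda\times\Lambda$ such that $\phi(u,v)$ holds iff $(\lambda(u),\lambda(v))\in R$ whenever $u,v$ are $d$-separated by $\WReach_{2d}[w]$. For every edge $uv\in\rho^{-1}(w)$ we have $\sigma(uv)\subseteq\WReach_{2d}[w]$, so \cref{cl:local-sep} yields that such $u,v$ are $d$-separated by $\WReach_{2d}[w]$ and hence $(\lambda(u),\lambda(v))\in R$. Now let $M=\{u_1v_1,\ldots,u_nv_n\}$ be a maximum matching in $\rho^{-1}(w)$; partitioning $M$ by $(\lambda(u_i),\lambda(v_i))\in R$ loses only a factor of $|\Lambda|^2$, so it suffices to bound the size of a single subclass in which all $u_i$ share color $\alpha$ and all $v_i$ share color $\beta$. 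For $i$ in this subclass define $A_i\coloneqq \WReach_d[u_i]\setminus\WReach_{2d}[w]$ and $B_i\coloneqq \WReach_d[v_i]\setminus\WReach_{2d}[w]$; then $|A_i|,|B_i|\leq s$ and $A_i\cap B_i=\sigma(u_iv_i)\setminus\WReach_{2d}[w]=\emptyset$. The crucial observation is that if $A_i\cap B_j=\emptyset$, then $\WReach_d[u_i]\cap \WReach_d[v_j]\subseteq\WReach_{2d}[w]$, which via \cref{cl:local-sep} forces $d$-separation of $u_i,v_j$ by $\WReach_{2d}[w]$, and the color condition $(\alpha,\beta)\in R$ then forces $u_iv_j\in E(H)$.

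Next, I would apply $4$-color Ramsey on $\{1,\ldots,n\}$: color each pair $i<j$ by $\bigl([A_i\cap B_j=\emptyset],[A_j\cap B_i=\emptyset]\bigr)\in\{0,1\}^2$. A monochromatic subset of color $(1,1)$ of size $t$ would yield a copy of $K_{t,t}$ in $H$ (matching edges $u_iv_i$ together with cross-edges $u_iv_j$ and $u_jv_i$ from the previous paragraph), contradicting the hypothesis that $H$ is $K_{t,t}$-free. Each of the other three color values forces a mono subset of size at most $s^0+s^1+\cdots+s^s$ via \cref{cl:Bollobas}: for the values with $c_1=0$ directly, and for those with $c_2=0$ after reversing the order of the indices. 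Hence the subclass has size bounded in terms of $s$ and $t$, $n$ is bounded in terms of $s$, $t$, $|\Lambda|$, and $|X_w|\leq 2n$ provides the required function $f$.

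The step I expect to be most delicate is the choice of the Local Feferman--Vaught separator: using the \emph{uniform} set $\WReach_{2d}[w]$ rather than the edge-dependent set $\sigma(uv)$ is what lets a single coloring $\lambda$ apply to all cross-pairs $(u_i,v_j)$ in a type class, which is precisely what makes the Ramsey/Bollobás dichotomy meaningful. The equivalence between $A_i\cap B_j=\emptyset$ and $\WReach_d[u_i]\cap\WReach_d[v_j]\subseteq\WReach_{2d}[w]$ is exactly what connects the set-pair combinatorics of \cref{cl:Bollobas} with the $d$-separation hypothesis required by \cref{thm:local-FV}.
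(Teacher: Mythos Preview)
Your argument is correct and is a genuine alternative to the paper's proof. The paper proceeds differently on two points. First, it does not use a single separator: it partitions $\rho^{-1}(w)$ according to the value $\sigma(uv)\in\Ss_w$ (costing a factor $2^{s-1}$), and for each fixed $S\in\Ss_w$ it invokes \cref{thm:local-FV} with separator $S$ itself, obtaining a coloring $\lambda_S$. Second, within each class $F^{S,\alpha,\beta}$ it runs a \emph{two}-color Ramsey argument on the predicate ``$\WReach_d[u_i]\cap\WReach_d[v_j]=S$'' for $i<j$: a large clique gives a $K_{t,t}$ exactly as in your $(1,1)$ case, while a large independent set yields $\WReach_d[u_i]\cap\WReach_d[v_j]\supsetneq S$ for all $i<j$ and hence a one-sided Bollob\'as configuration with $A_i=\WReach_d[u_i]\setminus S$, $B_i=\WReach_d[v_i]\setminus S$.

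Your choice of the uniform separator $\WReach_{2d}[w]$ is the cleaner move: it lets a single coloring $\lambda$ govern all cross-pairs $(u_i,v_j)$ at once and removes the $\sigma$-partition entirely. The price is the four-color Ramsey step (since you no longer control the direction in which the Bollob\'as hypothesis holds), which gives a somewhat worse but still perfectly acceptable bound. One cosmetic point: for the colors with $c_2=0$ it is slightly more transparent to swap the roles of $A$ and $B$ rather than to ``reverse the indices'', but either phrasing yields the premise of \cref{cl:Bollobas}.
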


In other words, we postulate that for every $w\in V(G)$, the edges that are mapped to $w$ by $\rho$ admit a vertex cover of size bounded in terms of $s$, $t$, $|\Lambda|$. In fact, we shall set
\[f(s,t,|\Lambda|)\coloneqq 2^s\cdot |\Lambda|^2\cdot \Ramsey(2t,s^s),\]
where $\Ramsey(\cdot,\cdot)$ is the usual two-colored Ramsey function: every graph on more than $\Ramsey(a,b)$ vertices contains a clique of size $a$ or an independent set of size $b$.

\begin{claimproof}
First, partition all the edges $uv\in \rho^{-1}(w)$ according to $\sigma(uv)$. Next, for a fixed $S\in \Ss_w$ we consider the coloring  $\lambda_S\colon V(G)\to \Lambda$ provided by \cref{thm:local-FV}, and we further partition all the edges $uv\in \sigma^{-1}(S)$ according to the pair of colors $(\lambda_S(u),\lambda_S(v))$. 
 (Note that this pair is ordered; we orient every considered edge $uv$ in an arbitrary way to obtain such an ordered pair.) This partitions all the edges $uv\in \rho^{-1}(w)$ into sets $F^{S,\alpha,\beta}$ for $S\in \Ss_w$ and $\alpha,\beta\in \Lambda$ so that for every $uv\in F^{S,\alpha,\beta}$, we have $\sigma(uv)=S$, $\lambda_S(u)=\alpha$, and $\lambda_S(v)=\beta$. Observe that since $|\Ss_w|\leq 2^{s-1}$, there are at most $2^{s-1}\cdot |\Lambda|^2$ distinct sets $F^{S,\alpha,\beta}$. Therefore, it suffices to expose, for each $S\in \Ss_w$ and $(\alpha,\beta)\in \Lambda^2$, a set $X_w^{S,\alpha,\beta}\subseteq \WReach^{-1}_d[w]$ with $|X_w^{S,\alpha,\beta}|\leq 2\cdot \Ramsey(2t,s^s)$ that contains an endpoint of every edge in $F^{S,\alpha,\beta}$. Then we can define $X_w$ as the union of those sets.

To this end, let $M$ be an inclusion-wise maximal matching in $F^{S,\alpha,\beta}$: a set of edges pairwise not sharing an endpoint. By maximality, $V(M)$ (the set of endpoints of the edges of $M$) contains an endpoint of every edge in $F^{S,\alpha,\beta}$. Also, note that each vertex of $V(M)$, as an endpoint of an edge from $\rho^{-1}(w)$, belongs to $\WReach_d^{-1}[w]$.
As $|V(M)|\leq 2|M|$, we will be able to set $X_w^{S,\alpha,\beta}\coloneqq V(M)$, provided we prove that $|M|\leq \Ramsey(2t,s^s)$. For contradiction suppose otherwise: $|M|>\Ramsey(2t,s^s)$.

Let $u_1v_1,u_2v_2,\ldots,u_{r+1}v_{r+1}$ be distinct edges of $M$, where $r=\Ramsey(s^s,2t)$ and $\lambda_S(u_i)=\alpha$ and $\lambda_S(v_i)=\beta$ for all $i\in \{1,\ldots,r+1\}$. Consider an auxiliary graph $J$ on vertex set $\{1,\ldots,r+1\}$ where indices $i<j$ are adjacent if and only if $\WReach_d[u_i]\cap \WReach_d[v_j]=S$. By Ramsey's Theorem, in $J$ we may find either a clique of size $2t$ or an independent set of size $s^s$.

Suppose first we have found  a clique of size $2t$ in $J$. By passing to a subsequence, we may assume that we work with a sequence of edges $u_1v_1,u_2v_2,\ldots,u_{2t}v_{2t}\in M\subseteq F_{S,\alpha,\beta}$ such that $\WReach_d[u_i]\cap \WReach_d[v_j]=S$ for all $1\leq i<j\leq 2t$. Recall that we have $\lambda_S(u_1)=\alpha$, $\lambda_S(v_1)=\beta$, and $\sigma(uv)=\WReach_d[u]\cap \WReach_d[v]=S$. By \cref{cl:local-sep} and the properties of $\lambda_S$ asserted by \cref{thm:local-FV}, we conclude that vertices $u',v'\in V(G)$ are adjacent in $H$ whenever $u'$ and $v'$ are such that $\lambda_S(u')=\alpha$, $\lambda_S(v')=\beta$, and $\WReach_d[u']\cap \WReach_d[v']=S$. In particular, all the vertices $u_1,\ldots,u_t$ are adjacent to all the vertices $v_{t+1},\ldots,v_{2t}$. Since these vertices are pairwise different as endpoints of edges of $M$, we have found a $K_{t,t}$ as a subgraph in $H$; a contradiction.

Suppose then that we have found  an independent set of size $s^s$ in $J$. By passing to a subsequence, we may assume that we work with a sequence of edges $u_1v_1,u_2v_2,\ldots,u_{s^s}v_{s^s}\in M\subseteq F_{S,\alpha,\beta}$ such that we have $\WReach_d[u_i]\cap \WReach_d[v_i]=S$ for each $1\leq i\leq s^s$, but $\WReach_d[u_i]\cap \WReach_d[v_j] \not= S$ for all $1\leq i<j\leq s^s$. Since we always have $S \subseteq \WReach_d[u_i]\cap \WReach_d[v_j]$ (this is because $S=\sigma(u_iv_i)=\sigma(u_jv_j)$), it follows that we have $\WReach_d[u_i]\cap \WReach_d[v_j]\supsetneq S$ for all $1\leq i<j\leq s^s$.


Now, for $i\in \{1,\ldots,s^s\}$ we define
\[A_i\coloneqq \WReach_d[u_i]\setminus S\qquad \textrm{and}\qquad B_i\coloneqq \WReach_d[v_i]-S.\]
Observe that by the assumptions of the considered case, the sequences $A_1,\ldots,A_{s^s}$ and $B_1,\ldots,B_{s^s}$ satisfy the premise of \cref{cl:Bollobas}: $A_i\cap B_i=\emptyset$ for all $i\in \{1,\ldots,s^s\}$ and $A_i\cap B_j\neq \emptyset$ for all $1\leq i<j\leq s^s$. 
Also, we have $|A_i|\leq s-1$ and $|B_i|\leq s-1$ for all $i$, because $w$ belongs to $S$ and all the considered weak reachability sets. So from \cref{cl:Bollobas} we may conclude that
\[s^s\leq (s-1)^0+(s-1)^1+\ldots+(s-1)^{s-1}\leq s\cdot (s-1)^{s-1}<s^s,\]
a contradiction.

Since both outcomes of applying Ramsey's Theorem to $J$ lead to a contradiction, we conclude that we must have $|M|\leq \Ramsey(2t,s^s)$. As we argued before, this completes the proof of \cref{cl:vertex-cover}.
\end{claimproof}

\medskip

With \cref{cl:vertex-cover} established, we proceed with constructing a model $\eta$ of $H$ in $G$, with depth bounded by $2d$ and congestion bounded by some $k\in \N$, to be determined later. For every $u\in V(G)$, we define
\[\eta(u)\coloneqq \{u\}\cup \bigcup \{\WReach_d^{-1}[w]~|~w\in V(G)\textrm{ is such that }u\in X_w\}.\]
We now verify that $\eta$ is the sought model.

\begin{claim}\label{cl:edges}
	For every edge $uv\in E(H)$, we have $u\in \eta(v)$ or $v\in \eta(u)$.
\end{claim}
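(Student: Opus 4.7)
The approach is to unfold the definition of $\eta$ and invoke \cref{cl:vertex-cover} at the pivot vertex $w\coloneqq \rho(uv)$ associated with the edge $uv$. The entire content has really been packed into \cref{cl:vertex-cover}; the claim at hand should fall out after a short chain of definitions.

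First I would fix an arbitrary edge $uv\in E(H)$ and set $w\coloneqq \rho(uv)$. By construction $uv\in \rho^{-1}(w)$, so the second bullet of \cref{cl:vertex-cover} guarantees that at least one endpoint of $uv$ belongs to $X_w$. By symmetry of the statement we wish to prove, we may assume $u\in X_w$, and aim to conclude that $v\in \eta(u)$.

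Next I would use the definition of $\eta(u)$. Because $u\in X_w$, the set $\WReach_d^{-1}[w]$ is one of the sets in the union forming $\eta(u)$, hence $\WReach_d^{-1}[w]\subseteq \eta(u)$. It then suffices to verify that $v\in \WReach_d^{-1}[w]$, equivalently $w\in \WReach_d[v]$. This is immediate from the definition of $\rho$: we have $w=\max \sigma(uv)\in \sigma(uv)=\WReach_d[u]\cap \WReach_d[v]$, so in particular $w\in \WReach_d[v]$. Therefore $v\in \WReach_d^{-1}[w]\subseteq \eta(u)$, as required.

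There is no real obstacle here: the argument is pure bookkeeping once \cref{cl:vertex-cover} is available. The design of $\eta$ — attaching to each vertex $u$ the inverse weak $d$-reachability sets of all the vertices $w$ for which $u$ belongs to the cover $X_w$ — is precisely tailored so that any edge with pivot $w$ is absorbed into the branch set of whichever endpoint the cover $X_w$ happens to select. The only thing worth double-checking is that both $u$ and $v$ do live in $\WReach_d^{-1}[w]$, which is built into the fact that $w\in \sigma(uv)$.
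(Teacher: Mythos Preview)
Your proof is correct and is essentially identical to the paper's own argument: set $w\coloneqq\rho(uv)$, note $u,v\in\WReach_d^{-1}[w]$ since $w\in\sigma(uv)$, apply \cref{cl:vertex-cover} to get $u\in X_w$ or $v\in X_w$, and conclude directly from the definition of $\eta$. The paper is a touch more terse, but the reasoning is the same.
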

\begin{claimproof}
	Denote $w\coloneqq \rho(uv) \in \WReach_d[u]\cap \WReach_d[v]$ and note that $u,v\in \WReach^{-1}_d[w]$. By the properties of set $X_w$ asserted by \cref{cl:vertex-cover}, we have $u\in X_w$ or $v\in X_w$. If $u\in X_w$, then $v\in \eta(u)$ by construction, and if $v\in X_w$, then $u\in \eta(v)$ by construction.
\end{claimproof}

\begin{claim}\label{cl:radius}
	For every $u\in V(G)$, the graph $G[\eta(u)]$ is connected and has radius at most $2d$.
\end{claim}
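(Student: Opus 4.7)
My plan is to show that $u$ itself serves as a center of $G[\eta(u)]$, proving connectedness and the radius bound of $2d$ simultaneously. Given any $v \in \eta(u) \setminus \{u\}$, the definition of $\eta(u)$ supplies some $w \in V(G)$ with $u \in X_w$ and $v \in \WReach^{-1}_d[w]$; the plan is to construct a $u$-to-$v$ walk of length at most $2d$ passing through $w$ and staying inside $\eta(u)$.

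The walk will be obtained as the concatenation of two paths meeting at $w$. Since $u \in X_w \subseteq \WReach^{-1}_d[w]$, we have $w \in \WReach_d[u]$, so the definition of weak reachability yields a path $P_u$ from $u$ to $w$ of length at most $d$ whose every vertex is $\succeq w$. Symmetrically, from $v \in \WReach^{-1}_d[w]$ we obtain an analogous path $P_v$ from $v$ to $w$. The walk is then $P_u$ followed by the reverse of $P_v$, of total length at most $2d$.

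The one point that needs verification is that every intermediate vertex of $P_u$ and $P_v$ belongs to $\eta(u)$. This is immediate from a standard suffix argument: for any vertex $x$ on $P_u$, the suffix of $P_u$ from $x$ to $w$ has length at most $d$ and every vertex is $\succeq w$, which witnesses $w \in \WReach_d[x]$; hence $x \in \WReach^{-1}_d[w]$, and since $u \in X_w$ the definition of $\eta(u)$ forces $x \in \eta(u)$. The same reasoning applies to $P_v$, and in particular $w \in \eta(u)$.

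I do not foresee a real obstacle: the definition of $\eta(u)$ as a union of inverse weak reachability sets anchored at points $w$ with $u \in X_w$ is tailor-made so that the two halves of the walk remain inside $\eta(u)$. The only subtlety worth flagging is that one works with walks rather than paths when concatenating $P_u$ and the reverse of $P_v$, but this is harmless since the only thing needed is the existence of a $u$-to-$v$ connection of length at most $2d$ in $G[\eta(u)]$, which in turn yields both connectedness and the radius bound.
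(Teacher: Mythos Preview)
Your proof is correct and follows essentially the same approach as the paper: both use the suffix argument to show that each $\WReach^{-1}_d[w]$ induces a connected subgraph of radius at most $d$ centered at $w$, and then concatenate a $u$-to-$w$ path with a $w$-to-$v$ path inside $\WReach^{-1}_d[w]\subseteq \eta(u)$ to exhibit $u$ as a center of radius at most $2d$. The paper phrases this slightly more abstractly (first proving the property of $G[\WReach^{-1}_d[w]]$ in general, then observing that $\eta(u)$ is a union of such sets all containing $u$), but the content is identical.
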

\begin{claimproof}
	First observe that for every $w\in V(G)$, the graph $G[\WReach^{-1}_d[w]]$ is connected and has radius at most $d$. Indeed, if $v\in \WReach^{-1}_d[w]$, then all the vertices of the $v$-to-$w$ path $P$ witnessing that $w\in \WReach_d[v]$ also belong to $\WReach_d^{-1}[w]$, as every suffix of $P$ witnesses weak $d$-reachability. This implies that $v$ is at distance at most $d$ from $w$ in $G[\WReach^{-1}_d[w]]$.
	
	Now, set $\eta(u)$ consists of $u$ and the union of a collection of several other sets of the form $\WReach_d^{-1}[w]$, each inducing a connected graph of radius at most $d$. Note also that each of those sets contains $u$, because we have $u\in X_w$ and $X_w\subseteq \WReach_d^{-1}[w]$ by construction. It follows that $\eta(u)$ induces a connected graph of radius at most $2d$.
\end{claimproof}

\begin{claim}\label{cl:congestion}
	Each vertex $v\in V(G)$ belongs to at most $s\cdot f(s,t,|\Lambda|)$ sets $\eta(u)$ for $u\in V(G)$, where $f$ is the function from \cref{cl:vertex-cover}. 
\end{claim}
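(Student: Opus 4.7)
The goal is to bound, for a fixed vertex $v \in V(G)$, the cardinality of the set $N_v \coloneqq \{u \in V(G) : v \in \eta(u)\}$. My plan is to unfold the definition of $\eta(u)$ and reduce the count to the two quantities already controlled by earlier ingredients: the weak $2d$-coloring bound $s$ and the vertex-cover bound $f(s,t,|\Lambda|)$ from \cref{cl:vertex-cover}.

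Unfolding the definition $\eta(u) = \{u\} \cup \bigcup\{\WReach_d^{-1}[w] : u \in X_w\}$, the condition $v \in \eta(u)$ holds if and only if (a) $u = v$, or (b) there exists $w \in V(G)$ with $u \in X_w$ and $v \in \WReach_d^{-1}[w]$. Case (a) contributes a single element to $N_v$, namely $v$ itself. For case (b), the key observation is that $v \in \WReach_d^{-1}[w]$ is just the condition $w \in \WReach_d[v]$; hence all relevant $w$ lie in $\WReach_d[v] \subseteq \WReach_{2d}[v]$, which by hypothesis has size at most $s$. Thus the set of candidate $w$'s has size at most $s$.

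For each such $w$, the candidates for $u$ are exactly the elements of $X_w$, and by \cref{cl:vertex-cover} we have $|X_w| \le f(s,t,|\Lambda|)$. Summing over the at most $s$ choices of $w$ gives $|N_v| \le 1 + s \cdot f(s,t,|\Lambda|)$, and since $f(s,t,|\Lambda|) \ge 1$ (indeed it is much larger, e.g.\ $\ge 2^s$), the additive $+1$ is absorbed into the stated bound $s \cdot f(s,t,|\Lambda|)$.

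There is essentially no obstacle here: the proof is a direct counting argument that combines the two previously established facts, namely the bound on weak $2d$-reachability sets coming from $\wcol_{2d}(G,\preceq) \le s$ and the bound on $|X_w|$ coming from \cref{cl:vertex-cover}. The whole strength of the construction was packed into the earlier claims, and this last claim is essentially bookkeeping that finishes verifying that $\eta$ is a congestion-$k$ model with $k = s \cdot f(s,t,|\Lambda|)$.
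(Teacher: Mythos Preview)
Your argument is essentially identical to the paper's: both unfold the definition of $\eta(u)$, observe that any witness $w$ must lie in $\WReach_d[v]$ (at most $s$ choices), and then bound the choices of $u$ by $|X_w|\le f(s,t,|\Lambda|)$. One small quibble: your ``absorption'' of the extra $+1$ from the case $u=v$ is not literally valid (obviously $1+s\cdot f>s\cdot f$); the paper's proof simply elides this case altogether, so both arguments are off by at most $1$, which is immaterial for the use of the claim in \cref{lem:cong-on-wcols}.
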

\begin{claimproof}
	Observe that if $v\in \eta(u)$, then there must exist a vertex $w$ such that $u\in X_w$ and $v\in \WReach^{-1}_d[w]$; equivalently, $w\in \WReach_d[v]$. Given $v$, there are at most $|\WReach_d[v]|\leq s$ ways to choose $w$, and then at most $|X_w|\leq f(s,t,|\Lambda|)$ ways to choose $u$. All in all, for any given $v$ there are at most $s\cdot f(s,t,|\Lambda|)$ candidates for vertices $u$ satisfying $v\in \eta(u)$.
\end{claimproof}
\cref{cl:edges,cl:radius,cl:congestion} ensure us that $\eta$ is a $k$-congested depth-$2d$ model of $H$ in $G$, where $k\coloneqq s\cdot f(s,t,|\Lambda|)$. This concludes the proof of the lemma.
\end{proof}


\medskip

We conclude this section with the discussion of a converse of \cref{thm:trans-shallow}. 

First, note that \cref{thm:trans-shallow} can be easily strengthened by replacing the assumption that $\Dd$ is transducible from $\Cc$, with requiring only that $\Dd$ is transducible with copying from $\Cc$. Indeed, recall that this means that $\Dd$ is transducible (without copying) from $\Cc\bullet p$ for some $p\in \N$. As $\Cc\bullet p\subseteq \Minors^p_0(\Cc)$, from \cref{prop:cong-minors-be} it follows that $\Cc\bullet p$ also has bounded expansion. So we may conclude from the original statement of   \cref{thm:trans-shallow} that $\Dd\subseteq \Minors^k_k((\Cc\bullet p)^\circ)$ for some $k\in \N$. And we easily have $\Minors^k_k((\Cc\bullet p)^\circ)\subseteq \Minors^{k'}_{k'}(\Cc^\circ)$ where $k'\coloneqq kp$, implying that $\Dd\subseteq \Minors^{k'}_{k'}(\Cc^\circ)$.

Second, we note the following converse of \cref{thm:trans-shallow}.

\begin{lemma}\label{lem:converse}
	For every $k\in \N$ and a graph class $\Cc$ of bounded expansion, the class $\Minors^k_k(\Cc^\circ)$ is transducible with copying from $\Cc$.
\end{lemma}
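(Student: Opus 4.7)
The plan is to exhibit, for every fixed $k\in\N$, a transduction with copying that extracts every $H\in\Minors^k_k(\Cc^\circ)$ from some graph in $\Cc$. A preliminary reduction shows that the operation $\Cc\mapsto\Cc^\circ$ is itself transducible with copying: in $G\bullet 2$, one can mark one extra copy of a fixed vertex as ``universal'' and, in the defining formula, join it to every vertex coloured ``original''. Since transductions with copying compose, it suffices to prove that $\Minors^k_k(\Cc')$ is transducible with copying from $\Cc'$ for any bounded-expansion class $\Cc'$; this will then be applied with $\Cc'=\Cc^\circ$.

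Fix $G^\ast\in\Cc'$ and $H\in\Minors^k_k(G^\ast)$ with a model $\eta$. I would work inside $G^\ast\bullet p$ for $p=k+1$. For every $u\in V(H)$ I pick a centre $c_u\in\eta(u)$ within distance $k$ of every vertex of $\eta(u)$ in $G^\ast[\eta(u)]$, and a BFS tree $T_u$ rooted at $c_u$. Using the congestion bound, I assign to each pair $(u,v)$ with $v\in\eta(u)$ an injective \emph{slot} $\iota(u,v)\in[k]$, so that distinct branch sets through $v$ get distinct slots; the copy $v^{(\iota(u,v))}$ will be the \emph{ambassador} of $v$ inside $\eta(u)$, and $r(u):=c_u^{(\iota(u,c_u))}$ the chosen representative of $u$. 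Each copy $v^{(i)}$ is coloured by: its slot $i$, a flag marking it as an ambassador, a centre flag, its depth in $T_u$ (a value in $[0,k]$), and a parent-pointer $\iota(u,\pi_u(v))\in[k]$; unused copies are coloured ``dead''. This uses only finitely many colours.

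With this palette, a formula $\mathrm{InBranch}(x,y)$ of bounded quantifier rank expresses ``$x$ is a centre-ambassador of some $u$ and $y$ is an ambassador in $\eta(u)$'' by existentially guessing the ambassador chain from $y$ up to $x$ along $G^\ast\bullet p$-edges, consistent with slots, depths and parent-pointers. Setting $\phi(x,y)$ to be ``$x$ and $y$ are distinct centres and there exist $x',y'$ with $\mathrm{InBranch}(x,x')$, $\mathrm{InBranch}(y,y')$, such that $x',y'$ are either twins in $G^\ast\bullet p$ or adjacent in $G^\ast\bullet p$'', and restricting via the induced-subgraph step to the representatives $\{r(u):u\in V(H)\}$, yields the \emph{touching graph} of $\eta$, which is always a supergraph of $H$ on the correct vertex set. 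Note that being twins (i.e.\ copies of the same underlying vertex) is FO-definable in $G^\ast\bullet p$ by a standard two-quantifier check.

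The main obstacle is to cut this touching graph down to exactly $H$ without blowing up the colour alphabet, since $H$ may have vertices of unbounded degree. Here I would use that $H$ inherits bounded expansion from $\Cc'$ by \cref{prop:cong-minors-be}, hence bounded degeneracy: fix a degeneracy orientation of $H$ with out-degree at most some constant $d_0$. For every oriented edge $u\to u'$ of $H$, pick a canonical witness (either a shared vertex of $\eta(u)\cap\eta(u')$, or an edge of $G^\ast$ between $\eta(u)$ and $\eta(u')$) and attach to the ambassadors at that witness a label from a finite alphabet identifying the witness partner. Shared-vertex witnesses are handled by subset-of-$[k]$ labels on each copy. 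For $G^\ast$-edge witnesses, I would slightly enlarge $p$ so that a bounded number of extra slots per vertex are reserved as ``pointer-target'' copies that uniquely name the witness partner; the bound $d_0$ caps how many such labels a single ambassador must carry, keeping the palette finite. The formula $\phi$ is then amended to require a correctly witness-tagged pair on every accepted edge, giving $\phi(G^+)=H$ on the set of representatives. Verifying that witness partners can really be encoded locally using finitely many colours, uniformly in $G^\ast$ and $\eta$, is the most delicate step of the proof.
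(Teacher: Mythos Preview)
Your overall plan---reduce to $\Cc^\circ$ and then transduce bounded-depth congested minors---is sound, but the paper takes a much shorter route that you overlook. The key observation is the identity
\[
\Minors^k_k(\Cc^\circ)\;=\;\Minors_k(\Cc^\circ\bullet k),
\]
obtained by using the $k$ clique-copies of each vertex to split a congestion-$k$ model into a congestion-$1$ model. Once congestion is removed, the paper simply invokes \cite[Corollary~7.6]{BraunfeldNOS22transductions}, which says that $\Minors_k(\Cc')$ is transducible from any bounded-expansion class $\Cc'$. So the entire argument is a two-line reduction plus a citation; you are essentially re-deriving that cited result from scratch.

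Your direct construction also has a genuine gap in the $\mathrm{InBranch}$ formula. The parent-pointer you store at an ambassador $y=v^{(\iota(u,v))}$ is only the \emph{slot} $\iota(u,\pi_u(v))$ of the parent, together with the depth. But $v$ may have another $G^\ast$-neighbour $w\neq\pi_u(v)$ lying in some other branch set $\eta(u')$ with $\iota(u',w)$ equal to that same slot and with the same depth in $T_{u'}$. Your existentially-guessed chain can then step to $w^{(\iota(u',w))}$ and wander into $\eta(u')$, so $\mathrm{InBranch}(x,y)$ may hold for centres $x$ whose branch set does not actually contain $y$. The standard fix---a bounded proper colouring of the branch sets so that touching ones get different colours (available because the touching graph is itself a shallow minor of a bounded-expansion class)---is exactly what the Braunfeld et al.\ argument provides via bounded star chromatic number, and it is not present in your encoding.

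Finally, the step you flag as ``most delicate'' (locally encoding witness partners to cut the touching graph down to $H$) is a real difficulty that the paper completely avoids by the reduction above; it never needs to recover $H$ from its touching graph.
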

\begin{proof}
	Let $\Cc'\coloneqq \Cc^\circ\bullet k$.
	It is easy to see that $\Cc'$ is transducible with copying from $\Cc$. Also, both adding a universal vertex and applying the $k$-blowup for a constant $k$ preserves the property of having bounded expansion (see \cref{prop:cong-minors-be}), hence $\Cc'$ also has bounded expansion. Finally, note that $\Minors^k_k(\Cc^\circ)=\Minors_k(\Cc^\circ\bullet k)=\Minors_k(\Cc')$. As transductions  are closed under composition, to conclude the proof it suffices to apply the following result from~\cite{BraunfeldNOS22transductions} to $\Cc'$.
	
	\begin{claim}[{\cite[Corollary 7.6]{BraunfeldNOS22transductions}}]\label{cl:shallow-minors-trans}
		For every $k\in \N$ and a graph class of bounded expansion $\Cc'$, the class $\Minors_k(\Cc')$ is transducible from $\Cc'$.
	\end{claim}

	We note that the original statement of \cite[Corollary 7.6]{BraunfeldNOS22transductions} assumes that $\Cc'$ has bounded {\em{star chromatic number}}, but this is implied by the assumption that $\Cc'$ has bounded expansion; see the discussion in \cite{BraunfeldNOS22transductions}.
\end{proof}

The discussion above justifies the following conclusion --- a characterization of transducibility with copying among classes of bounded expansion.

\begin{corollary}
	Let $\Cc$ and $\Dd$ be graph classes of bounded expansion. Then $\Dd$ is transducible with copying from $\Cc$ if and only if there exists $k\in \N$ such that $\Dd\subseteq \Minors_k^k(\Cc^\circ)$.
\end{corollary}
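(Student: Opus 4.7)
The plan is to derive this corollary immediately by combining the two results of this section: the strengthened form of \cref{thm:trans-shallow} for transductions with copying (as noted in the discussion preceding \cref{lem:converse}) with \cref{lem:converse} itself. The two implications of the biconditional correspond to these two ingredients, respectively.

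For the ``only if'' direction, I would first check that the hypothesis of \cref{thm:trans-shallow} --- that $\Dd$ is weakly sparse --- is automatically satisfied under the assumption that $\Dd$ has bounded expansion. This is immediate from taking $d=0$ in the definition of bounded expansion: every subgraph of every $G \in \Dd$ has at most $p_0 |V|$ edges, which forbids $K_{t,t}$ as a subgraph whenever $t > 2p_0$, since $K_{t,t}$ has edge density $t/2$. Having established this, I would invoke the strengthened form of \cref{thm:trans-shallow} applied to $\Cc$ and $\Dd$, obtaining some $k \in \N$ with $\Dd \subseteq \Minors^k_k(\Cc^\circ)$, as required.

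For the ``if'' direction, \cref{lem:converse} directly asserts that $\Minors^k_k(\Cc^\circ)$ is transducible with copying from $\Cc$. Since transducibility with copying is inherited by subclasses (the same transduction witnesses it for any $\Dd \subseteq \Minors^k_k(\Cc^\circ)$), this property transfers to $\Dd$.

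I do not expect any genuine obstacle: this corollary is essentially a repackaging of \cref{thm:trans-shallow} and \cref{lem:converse} into a clean characterization. The only mildly nontrivial observation is the elementary implication from bounded expansion to weak sparsity, which is what allows the weak sparsity hypothesis of \cref{thm:trans-shallow} to be replaced by bounded expansion of $\Dd$ in the statement.
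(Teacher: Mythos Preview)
Your proposal is correct and follows exactly the approach indicated by the paper, which simply states that ``the discussion above justifies the following conclusion'' without giving a separate proof. You have correctly identified the one small point the paper leaves implicit, namely that bounded expansion of $\Dd$ implies weak sparsity, so that the strengthened form of \cref{thm:trans-shallow} applies.
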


\section{Asymptotics of the weak coloring numbers}

In this section we use the characterization of \cref{thm:trans-shallow} to derive our non-transducibility results: \cref{thm:main-tw,thm:main-hadw}. The main idea is to use the asymptotics of the weak coloring numbers, expressed as a function of the distance parameter $d$, as a transduction invariant. We need a few definitions to speak about this in precise terms.

For functions $f,g\colon \N\to \N\cup \{\infty\}$, we shall say that $g$ {\em{dominates}} $f$ if there exists a constant $c\in \N$ such that
\[f(n)\leq c\cdot g(cn)+c\qquad\textrm{for all }n\in \N.\]
The following simple observation is crucial: if $f$ and $g$ are polynomials of degrees $a$ and $b$, respectively, and $a>b$, then $g$ does {\em{not}} dominate $f$. This is because the expression $c\cdot g(cn)+c$ is a polynomial (in $n$) of degree $b$, while $f(n)$ is a polynomial of degree $a>b$.

For a graph class $\Cc$, we define a function
$\pi_\Cc\colon \N\to \N\cup \{\infty\}$ as follows:
\[\pi_\Cc(d)\coloneqq \wcol_d(\Cc).\]
Let us observe the following.

\begin{lemma}\label{lem:cong-wcol}
	Let $\Cc,\Dd$ be graph classes such that $\Dd\subseteq \Minors^k_k(\Cc^\circ)$, for some $k\in \N$. Then $\pi_\Cc$ dominates~$\pi_\Dd$.
\end{lemma}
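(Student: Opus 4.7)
The plan is to take an arbitrary $H \in \Dd$ together with a witnessing $k$-congested depth-$k$ model $\eta$ of $H$ in $G^\circ$ for some $G \in \Cc$, and to construct an ordering of $V(H)$ whose weak $d$-reachability sets have size at most $O_k(\wcol_{O_k(d)}(G))$. Since all constants depending on $k$ can be folded into the single constant $c$ in the definition of \emph{dominates}, this will immediately yield $\pi_\Dd(d) \leq c \cdot \pi_\Cc(cd) + c$. The construction rests on two ingredients: an optimal ordering $\preceq_G$ of $V(G^\circ)$ that places the universal vertex first, and a canonical choice of one distinguished vertex $m(u) \in \eta(u)$ per branch set, namely its $\preceq_G$-minimum.

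Set $d' \coloneqq (2k+1)d + 2k$ and pick $\preceq_G$ so that, restricted to $V(G)$, it is optimal for $\wcol_{d'}$, and it places the added universal vertex $z$ first. A direct case analysis shows that any path witnessing weak $d'$-reachability of a non-$z$ vertex cannot pass through $z$, so $\wcol_{d'}(G^\circ, \preceq_G) \leq \wcol_{d'}(G) + 1$. For each $u \in V(H)$ let $m(u)$ be the $\preceq_G$-minimum vertex of $\eta(u)$, and let $\preceq_H$ be any total order extending the preorder "$u \preceq v$ iff $m(u) \preceq_G m(v)$". Now take any $v \in V(H)$ and any $u \in \WReach^{H,\preceq_H}_d[v]$ witnessed by a path $v = v_0, v_1, \ldots, v_\ell = u$ in $H$ of length $\ell \leq d$ with $v_i \succeq_H u$ for all $i$. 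Because consecutive branch sets touch and each has radius, hence diameter, at most $2k$, I concatenate short detours inside each $\eta(v_i)$ with at most one crossing edge to obtain a walk in $G^\circ$ from $m(v)$ to $m(u)$ of length at most $\ell(2k+1) + 2k \leq d'$. The decisive observation is that every vertex on this walk lies in some $\eta(v_i)$ and is therefore $\succeq_G m(v_i) \succeq_G m(u)$ (the first inequality by minimality of $m(v_i)$ inside $\eta(v_i)$, the second by $v_i \succeq_H u$). Hence $m(u) \in \WReach^{G^\circ,\preceq_G}_{d'}[m(v)]$.

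Finally, the congestion-$k$ condition says every vertex of $G^\circ$ lies in at most $k$ branch sets, so the map $m \colon V(H) \to V(G^\circ)$ is at most $k$-to-one. Combined with the previous paragraph this gives
\[|\WReach^{H,\preceq_H}_d[v]| \leq k \cdot |\WReach^{G^\circ,\preceq_G}_{d'}[m(v)]| \leq k \cdot (\wcol_{d'}(G)+1),\]
so taking suprema yields $\pi_\Dd(d) \leq k \cdot \pi_\Cc(d') + k$. Choosing, e.g., $c \coloneqq 4k+1$ then produces the inequality $\pi_\Dd(d) \leq c \cdot \pi_\Cc(cd) + c$ required by the definition of "dominates". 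The step deserving real care is the lifting: one must guarantee that the lifted walk never dips below $m(u)$ in $\preceq_G$. This is precisely what forces $m(u)$ to be the $\preceq_G$-minimum of $\eta(u)$, rather than, say, a radius-$k$ center of $G^\circ[\eta(u)]$; with the minimum choice, the inequality $w \succeq_G m(v_i)$ holds "for free" for every $w \in \eta(v_i)$, and the argument closes.
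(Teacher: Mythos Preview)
Your proof is correct and follows essentially the same approach as the paper: pick for each $u\in V(H)$ the $\preceq_G$-minimum $m(u)$ of $\eta(u)$, order $V(H)$ by the $m$-values, lift a witnessing path in $H$ to a walk in $G^\circ$ that stays inside the union of the traversed branch sets (hence never drops below $m(u)$), and then use congestion~$k$ to bound the fibers of $m$. The only cosmetic difference is that the paper routes the lifted walk through each intermediate $m(v_i)$, giving a bound of $(4k+1)d$ on its length, while you pass directly through the branch sets and obtain the slightly sharper $(2k+1)d+2k$; both suffice for the domination statement.
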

\begin{proof}
	Since adding a universal vertex to a graph can increase any weak $d$-coloring number by at most $1$, it suffices to bound the weak coloring numbers of $k$-congested depth-$k$ minors of a graph in terms of the weak coloring numbers of the graph itself. Precisely, we will show that if $H$ and $G$ are graphs such that $H\in \Minors^k_k(G)$, then for every $d\in \N$ we have
	\begin{equation}\label{eq:morswin}
		\wcol_d(H)\leq k\cdot \wcol_{(4k+1)d}(G).
	\end{equation}
	By applying~\eqref{eq:morswin} to every graph $H\in \Dd$ and the graph $G\in \Cc^\circ$ that contains $H$ as a $k$-congested depth-$k$ minor, we obtain that
	\[\pi_\Dd(d)=\wcol_d(\Dd)\leq k\cdot \wcol_{(4k+1)d}(\Cc^\circ)\leq k\cdot \wcol_{(4k+1)d}(\Cc)+k=k\cdot \pi_\Cc((4k+1)d)+k,\]
	thereby proving that $\pi_\Cc$ dominates $\pi_\Dd$.
	
	Towards~\eqref{eq:morswin}, let $\eta$ be a $k$-congested depth-$k$ model of $H$ in $G$, and let $\preceq$ be a total order on $V(G)$ such that $\wcol_{(4k+1))d}(G,\preceq)=\wcol_{(4k+1))d}(G)$. For every vertex $u$ of $H$, let $\gamma(u)$ be the $\preceq$-minimal vertex of $\eta(u)$. We now define a total order $\preceq^\star$ on $V(H)$ by pulling $\preceq$ through $\gamma$: whenever $\gamma(u)\prec \gamma(u')$ for some $u,u'\in V(H)$, we set $u\prec^\star u'$. The order $\preceq^\star$ between vertices $u$ with equal $\gamma(u)$ is chosen arbitrarily.
	
	We now claim that \[\wcol_d(H,\preceq^\star)\leq k\cdot \wcol_{(4k+1)d}(G,\preceq)=k\cdot \wcol_{(4k+1)d}(G);\]
	this will prove \eqref{eq:morswin}. Towards this end, suppose $u,v\in V(H)$ are such that $u\in \WReach^{H,\preceq^\star}_d[v]$. Let $Q$ be a path in $H$ witnessing this membership: $Q$ has length at most $d$, starts in $v$, ends in $u$, and all the vertices traversed by $Q$ are not smaller in $\preceq^\star$ than $u$. Consider an edge $xy$ of $Q$. Noting that $\eta(x)$ and $\eta(y)$ touch and induce connected graphs of radius at most $d$, we may find a path $P_{xy}$ in $G$ of length at most $4k+1$ that connects $\gamma(x)$ with $\gamma(y)$ and has all its vertices in $\eta(x)\cup \eta(y)$. By concatenating all the paths $P_{xy}$ for $xy\in E(Q)$ along $Q$, we obtain a walk $W$ in $G$ of length at most $(4k+1)d$ with endpoints in $\gamma(u)$ and $\gamma(v)$ such that all the vertices visited by $W$ belong to $\bigcup_{w\in V(Q)} \eta(w)$. Since $u$ is the $\preceq^\star$-smallest vertex of $V(Q)$, by the construction of $\preceq^\star$ we conclude that $\gamma(u)$ is the $\preceq$-smallest vertex visited by $W$. This means that the walk $W$ witnesses that $\gamma(u)\in \WReach_{(4k+1)d}^{G,\preceq}[\gamma(v)]$.
	
Now, for a fixed $v$ there are at most $|\WReach_{(4k+1)d}[\gamma(v)]|\leq \wcol_{(4k+1)d}(G,\preceq)$ candidates for the vertex $\gamma(u)$, since by the reasoning above, each such vertex must belong to $\WReach_{(4k+1)d}[\gamma(v)]$. As $\eta$ is a $k$-congested model, for each $z\in \WReach_{(4k+1)d}[\gamma(v)]$ there are at most $k$ vertices $u$ of $H$ with $z=\gamma(u)$. We conclude that for a fixed $v\in V(H)$ there are at most $k\cdot \wcol_{(4k+1)d}(G,\preceq)$ candidates for a vertex $u\in V(H)$ such that $u\in \WReach_d^{H,\preceq^\star}[v]$. This means that $\wcol_d(H,\preceq^\star)\leq k\cdot  \wcol_{(4k+1)d}(G,\preceq)$, as~claimed.
\end{proof}

By combining \cref{thm:trans-shallow} with \cref{lem:cong-wcol}, we may immediately conclude the following.

\begin{corollary}\label{cor:trans-dominates}
	Suppose $\Cc$ is a graph class of bounded expansion and $\Dd$ is a weakly sparse graph class that is transducible from $\Cc$. Then $\pi_\Cc$ dominates $\pi_\Dd$.
\end{corollary}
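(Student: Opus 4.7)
The plan is to derive this corollary by a direct concatenation of the two main results already developed in the paper, namely \cref{thm:trans-shallow} and \cref{lem:cong-wcol}. The only substantive work has already been done: the former translates a logical hypothesis (transducibility) into a combinatorial one (containment as a shallow congested minor of an apex-augmented graph), while the latter translates that combinatorial containment into the relationship between asymptotic growth rates of weak coloring numbers. So my task here is essentially just to check that the hypotheses match up and to compose the two implications.

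First, I would apply \cref{thm:trans-shallow} to $\Cc$ and $\Dd$. The theorem requires $\Cc$ to have bounded expansion and $\Dd$ to be weakly sparse and transducible from $\Cc$, which is exactly what is assumed in \cref{cor:trans-dominates}. This yields some $k \in \N$ such that
\[\Dd \subseteq \Minors^k_k(\Cc^\circ).\]

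Next, I would feed this containment into \cref{lem:cong-wcol}, whose statement is phrased precisely for this situation: from $\Dd \subseteq \Minors^k_k(\Cc^\circ)$ it immediately delivers that $\pi_\Cc$ dominates $\pi_\Dd$. This is exactly the conclusion sought, so the proof is complete in these two steps.

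I do not expect any obstacle, as neither the bookkeeping of constants nor any auxiliary construction is needed beyond what is already encapsulated in the two cited results. The only minor remark worth making in the write-up is that the assumption in \cref{lem:cong-wcol} does not explicitly demand bounded expansion of $\Cc$, so we may apply it freely; the bounded expansion assumption on $\Cc$ is used only to invoke \cref{thm:trans-shallow} in the first step. Consequently, the proof can be presented in essentially one sentence, with a pointer to both results.
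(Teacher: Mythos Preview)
Your proposal is correct and matches the paper's approach exactly: the paper also states that the corollary follows immediately by combining \cref{thm:trans-shallow} with \cref{lem:cong-wcol}, with no additional work required.
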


We may now use \cref{cor:trans-dominates} to conclude our main results. We recall them for convenience.

\mainTw*
\begin{proof}
	For every $t\in \N$, let $\Tt_t$ be the class of graphs of treewidth at most $t$.
	As proved in~\cite{GroheKRSS18}, we have
	\[\pi_{\Tt_t}(d)=\binom{d+t}{t},\]
	which, for a fixed $t$, is a polynomial of degree $t$ in $d$. Since $\pi_{\Tt_t}(d)$ is a polynomial of degree $t$ and $\pi_{\Tt_{t+1}}(d)$ is a polynomial of degree $t+1$, we have that $\pi_{\Tt_t}(d)$ does not dominate $\pi_{\Tt_{t+1}}(d)$. We conclude from \cref{cor:trans-dominates} that $\Tt_{t+1}$ is not transducible from $\Tt_{t}$.
\end{proof}

\mainHadw*
\begin{proof}
	For every $t\in \N$, let $\Hh_t$ be the class of graphs of Hadwiger number at most $t$. The results of \cite{GroheKRSS18} and \cite{HeuvelMQRS17} show that
	\[\pi_{\Hh_t}(d)\geq \Omega(d^{t-1})\qquad\textrm{and}\qquad \pi_{\Hh_t}(d)\leq \Oh(d^t).\]
	In particular, $\pi_{\Hh_t}$ does not dominate $\pi_{\Hh_{t+2}}$, because the former is upper-bounded by a polynomial of degree $t$ and the latter is lower bounded by a polynomial of degree $t+1$. By \cref{cor:trans-dominates}, we conclude that $\Hh_{t+2}$ is not transducible from $\Hh_t$.
\end{proof}

\mainPlanarTw*
\begin{proof}
	Let $\Tt_4$ be the class of graphs of treewidth at most $4$ and $\mathscr{Pl}$ be the class of planar graphs. By the results of~\cite{GroheKRSS18,HeuvelMQRS17}, we have
	\[\pi_{\Tt_4}(d)=\binom{d+4}{4}=\Theta(d^4)\qquad\textrm{and}\qquad \pi_{\mathscr{Pl}}(d)\leq \Oh(d^3).\]
	As before, it now follows  directly from \cref{cor:trans-dominates} that $\Tt_4$ is not transducible from $\mathscr{Pl}$.
\end{proof}

\bibliographystyle{plain}
\bibliography{references}

\end{document}